\newcommand{\keywords}[1]{\par\addvspace\baselineskip
\noindent\keywordname\enspace\ignorespaces#1}
\newcommand{\calP}{\mathcal{P}}
\newcommand{\calL}{\mathcal{L}}
\newcommand{\MaxRS}{\mathsf{MaxCov_R}}
\newcommand{\MaxCRS}{\mathsf{MaxCov_D}}
\newcommand{\Cover}{\mathsf{Cover}}
\newcommand{\OPT}{\mathsf{Opt}}
\newcommand{\R}{\mathbb{R}}
\newcommand{\F}{\mathsf{F}}
\newcommand{\tF}{\widehat{\mathsf{F}}}
\newcommand{\cell}{\mathsf{c}}
\newcommand{\upperd}{\mathsf{b}}
\newcommand{\greedy}{\mathsf{Greedy}}
\newcommand{\calA}{\mathcal{A}}
\newcommand{\eat}[1]{}
\newcommand{\BlockPartion}{\textsc{MaxCovCell}}
\newcommand{\GeneralmPartion}{\textsc{MaxCovCellM}}
\newcommand{\DP}{\textsc{DP}}
\newcommand{\GREEDYP}{\textsc{Greedy}}
\newcommand{\Partition}{\textsc{Partition}}
\newcommand{\topic}[1]{\noindent{\underline{#1}:}}
\begin{document}

\mainmatter  % start of an individual contribution

% first the title is needed
\title{Linear Time Approximation Schemes for Geometric Maximum Coverage\thanks{Jian Li, Bowei Zhang and Ningye Zhang's research was supported in part by the National Basic Research Program of China Grant 2015CB358700, 2011CBA00300, 2011CBA00301, the National Natural Science Foundation of China Grant 61202009, 61033001, 61361136003. Haitao Wang's research was supported in part by NSF under Grant CCF-1317143.}}

% a short form should be given in case it is too long for the running head
%\titlerunning{Lecture Notes in Computer Science: Authors' Instructions}

% the name(s) of the author(s) follow(s) next
%
% NB: Chinese authors should write their first names(s) in front of
% their surnames. This ensures that the names appear correctly in
% the running heads and the author index.
%
\author{Jian Li$^\dagger$
\and
Haitao Wang$^\ddagger$
\and
Bowei Zhang$^\dagger$
\and
Ningye Zhang$^\dagger$
}
%
%\authorrunning{Lecture Notes in Computer Science: Authors' Instructions}
% (feature abused for this document to repeat the title also on left hand pages)

% the affiliations are given next; don't give your e-mail address
% unless you accept that it will be published
\institute{$^\dagger$ Institute for Interdisciplinary Information Sciences (IIIS),  \\Tsinghua University, Beijing, China, 100084\\
$^\ddagger$ Department of Computer Science,	Utah State University, Utah, USA, 84322	
\\
\mailsa\\
\mailsb\\
}
%
% NB: a more complex sample for affiliations and the mapping to the
% corresponding authors can be found in the file "llncs.dem"
% (search for the string "\mainmatter" where a contribution starts).
% "llncs.dem" accompanies the document class "llncs.cls".
%

%\toctitle{Lecture Notes in Computer Science}
%\tocauthor{Authors' Instructions}
\maketitle

\vspace{-0.3cm}

\begin{abstract}
We study approximation algorithms for the following geometric version of the maximum coverage problem:
Let $\calP$ be a set of $n$ weighted points in the plane.
We want to place $m$ $a \times b$ rectangles such that the sum of the weights of the points in $\calP$ covered by these rectangles is maximized.
For any fixed $\varepsilon>0$, we present efficient approximation schemes
that can find a $(1-\varepsilon)$-approximation to the optimal solution.
In particular, for $m=1$, our algorithm runs in linear time
$O(n\log (\frac{1}{\varepsilon}))$, improving over the previous result.
For $m>1$, we present an algorithm that runs in
$O(\frac{n}{\varepsilon}\log (\frac{1}{\varepsilon})+m(\frac{1}{\varepsilon})^{O(\min(\sqrt{m},\frac{1}{\varepsilon}))})$ time.
\keywords{Maximum Coverage, Geometric Set Cover, Polynomial-Time Approximation Scheme}
\end{abstract}

\vspace{-0.3cm}

\section{Introduction} % Major section

The maximum coverage problem is a classic problem in theoretical computer science and combinatorial optimization.
In this problem, we are given a universe $\calP$ of weighted elements, a family of subsets and a number $k$.
 The goal is to select at most $k$ of these subsets such that the sum of the weights of the covered elements in $\calP$ is maximized.
It is well-known that the most natural greedy algorithm achieves
an approximation factor of $1-1/e$, which is essentially optimal (unless P=NP) \cite{hochbaum1998analysis,nemhauser1978analysis,feige1998threshold}.
However, for several geometric versions of the maximum coverage problem,
better approximation ratios can be achieved (we will mention some of such results below).
In this paper, we mainly consider the following geometric maximum coverage problem:
\begin{definition} ($\MaxRS(\calP,m)$)
Let $\calP$ be a set of $n$ points in a 2-dimensional Euclidean plane $\R^2$.
Each point  $p\in\calP$  has a given weight $w_{p} \geq 0$.
The goal of our geometric max-coverage problem (denoted as $\MaxRS(\calP,m)$) is to place $m$ $a \times b$ rectangles
such that the sum of the weights of the covered points by these rectangles is maximized.
More precisely, let $S$ be the union of $m$ rectangles we placed. Our goal is to maximize
$$
\Cover(\calP,S)=\sum_{p \in \calP\cap S}{w_{p}}.
$$
\end{definition}

We also study the same coverage problem with unit disks, instead of rectangles.
We denote the corresponding problem as $\MaxCRS(\calP, m)$.
One natural application of the geometric maximum coverage problem is the facility placement problem.
In this problem, we would like to locate a certain number of facilities to serve the maximum number of clients.
Each facility can serve a region (depending on whether the metric is $L_1$ or $L_2$,
the region is either a square or a disk).

\vspace{-0.3cm}
\subsection{$m=1$}
\vspace{-0.2cm}

\topic{Previous Results}
We first consider $\MaxRS(\calP,1)$.
Imai and Asano~\cite{imai1983finding}, Nandy and Bhattacharya~\cite{nandy1995unified} gave two different exact algorithms for  $\MaxRS(\calP,1)$, both running in time $O(n\log n)$.
It is also known that solving $\MaxRS(\calP,1)$ exactly in algebraic decision tree model requires $\Omega(n\log n)$ time \cite{ben1983lower}.
Tao et al.~\cite{tao2013approximate} proposed a randomized approximation scheme for $\MaxRS(\calP,1)$.
With probability $1-1/n$, their algorithm returns a ($1-\varepsilon$)-approximate answer in
$O(n\log (\frac{1}{\varepsilon})+n\log\log n)$ time.
In the same paper, they also studied the problem in the external memory model.

\vspace{0.1cm}
\topic{Our Results}
For $\MaxRS(\calP,1)$ we show that there is an approximation scheme
that produces a ($1-\varepsilon$)-approximation and runs in $O(n\log (\frac{1}{\varepsilon}))$ time,
improving the result by Tao et al.~\cite{tao2013approximate}.

\vspace{-0.4cm}
\subsection{General $m>1$}

\topic{Previous Results}
Both $\MaxRS(\calP, m)$ and $\MaxCRS(\calP, m)$ are NP-hard if $m$ is part of the input \cite{megiddo1984complexity}.
The most related work is de Berg, Cabello and Har-Peled~\cite{de2009covering}.
They mainly focused on using unit disks (i.e., $\MaxCRS(\calP, m)$).
They proposed a $(1-\varepsilon)$-approximation algorithm for $\MaxCRS(\calP,m)$
with time complexity
$O(n(m/\varepsilon)^{O(\sqrt{m})})$.

\footnote{
They were mainly interested in the case where $m$ is a constant.
So the running time becomes
$O(n(1/\varepsilon)^{O(\sqrt{m})})$
(which is the bound claimed in their paper)
and
the exponential dependency on $m$ does not look too bad for $m=O(1)$.
Since we consider the more general case,
we make the dependency on $m$ explicit.
}
We note that their algorithm can be easily extended to $\MaxRS$ with the same time complexity.

We are not aware of any explicit result for $\MaxRS(\calP,m)$ for general $m>1$.

It is known \cite{de2009covering} that the problem admits a PTAS  via the standard shifting technique \cite{hochbaum1985approximation}.
\footnote{
Hochbaum and Maass \cite{hochbaum1985approximation}
obtained a PTAS for the problem of covering given points
with a minimal number of rectangles. Their algorithm can be easily modified into a PTAS
for $\MaxRS(\calP,m)$ with running time $n^{O(1/\epsilon)}$.
}

\vspace{0.1cm}
\topic{Our Results}
Our main result is an approximation scheme for $\MaxRS(\calP,m)$ which runs in time
$$
O\left( \frac{n}{\varepsilon}\log \frac{1}{\varepsilon}+m\left(\frac{1}{\varepsilon}\right)^{\Delta}\right),
$$
where $\Delta=O(\min(\sqrt{m},\frac{1}{\varepsilon}))$.
Our algorithm can be easily extended to other shapes. In Appendix~\ref{sec:othershape}, we sketch an extension for approximating
$\MaxCRS(\calP, m)$.
The running time of our algorithm is
$$
O\left(n\Bigl(\frac{1}{\varepsilon}\Bigr)^{O(1)}+m\Bigl(\frac{1}{\varepsilon}\Bigr)^{\Delta}\right).
$$

Following the convention of approximation algorithms,
$\varepsilon$ is a fixed constant.
Hence, the second term is essentially $O(m)$ and
the overall running time is essentially linear $O(n)$. Our algorithm follows the standard shifting technique~\cite{hochbaum1985approximation},
which reduces the problem to a smaller problem restricted in a constant size cell.
The same technique is also used in de Berg et al.~\cite{de2009covering}.
They proceeded by first solving the problem exactly in each cell, and then
use dynamic programming to find the optimal
allocation for all cells.
\footnote{
	In fact, their dynamic programming runs in time at least $\Omega(m^2)$.
	Since they focused on constant $m$, this term is negligible in their running time.
	But if $m>\sqrt{n}$, the term can not be ignored and may become the dominating term.
	}

Our improvement comes from another two simple yet useful ideas.
First, we apply the shifting technique in a different way and make the side length of grids much smaller ($O(\frac{1}{\varepsilon})$, instead of $O(m)$ in de Berg et al.'s algorithm~\cite{de2009covering}). Second, we solve the dynamic program approximately. In fact, we show that
a simple greedy strategy (along with some additional observations) can be used for this purpose, which allows us to save another $O(m)$ term.

\subsection{Other Related Work}

There are many different variants for this problem.
We mention some most related problems here.

Barequet et al. \cite{barequet1997translating}, Dickerson and Scharstein \cite{dickerson1998optimal} studied the max-enclosing polygon problem which aims to find a position of a given polygon to cover maximum number of points.
This is the same as $\MaxRS(\calP,1)$ if a polygon is a rectangle. Imai et al.~\cite{imai1983finding} gave an optimal algorithm for the max-enclosing rectangle problem with time complexity $O(n\log n)$.

$\MaxCRS(\calP, m)$ was introduced by Drezner~\cite{drezner1981note}.
Chazelle and Lee~\cite{chazelle1986circle} gave an $O(n^{2})$-time exact algorithm for the problem $\MaxCRS(\calP,1)$.
A Monte-Carlo $(1-\varepsilon)$-approximation algorithm for $\MaxCRS(\calP,1)$
was shown in \cite{agarwal2002translating}, where $\calP$ is
an unweighted point set.
Aronov and Har-Peled~\cite{aronov2008approximating} showed that for unweighted point sets an $O(n\varepsilon^{-2}\log n)$ time Monte-Carlo $(1-\varepsilon)$-approximation algorithm exists,
and also provided some results for other shapes. de Berg et al.~\cite{de2009covering} provided an $O(n\varepsilon^{-3})$ time $(1-\varepsilon)$-approximation algorithm.

For $m>1$, $\MaxCRS(\calP,m)$ has only a few results.
For $m=2$, Cabello et al.~\cite{cabello2008covering} gave an exact algorithm for
this problem when the two disks are disjoint in $O(n^{8/3}\log^{2}n)$ time. de Berg et al.~\cite{de2009covering} gave $(1-\varepsilon)$-approximation algorithms
that run in $O(n\varepsilon^{-4m+4}\log^{2m-1}{(1/\varepsilon)})$ time for $m>3$ and in $O(n\varepsilon^{-6m+6}\log{(1/\varepsilon)})$ time for $m=2,3$.

The dual of the maximum coverage problem is the classical set
cover problem. The geometric set cover problem
has enjoyed extensive study in the past two decades.
The literature is too vast to list exhaustively here.
See e.g., \cite{Bronnimann,Clarkson,even2005hitting,mustafa2009ptas,varadarajan2010weighted,Chan2012} and the references therein.

\vspace{-0.3cm}

\section{Preliminaries} % Major section
\label{sec:prel}

We first define some notations and mention some results that are needed in our algorithm.
Denote by $G_{\delta}(a,b)$ the square grid with mesh size $\delta$
such that the vertical and horizontal lines  are defined as follows
\begin{equation*}
G_{\delta}(a,b)=\left\{(x,y)\in\mathbb{R}^{2}\mid y=b+k\cdot\delta,k\in \mathbb{Z}\right\}
\cup\left\{(x,y)\in\mathbb{R}^{2}\mid x=a+k\cdot\delta,k\in \mathbb{Z}\right\}
\end{equation*}
Given $G_{\delta}(a,b)$ and a point $p=(x,y)$, we call the integer pair
$(\lfloor x/\delta\rfloor,\lfloor y/\delta\rfloor)$ the {\em index} of $p$
(the index of the cell in which $p$ lies in).

%------------------------------------------------
\vspace{0.2cm}
\topic{Perfect Hashing}
Dietzfetbinger et al. \cite{dietzfelbinger1997reliable}
shows that if each basic algebraic operation (including $\{+,-,\times,\div,\log_2,\exp_2\}$) can be done in constant time,
we can get a perfect hash family so that each insertion and membership query takes $O(1)$ expected time.
In particular, using this hashing scheme,
we can hash the indices of all points, so that we can obtain the list of all non-empty cells in $O(n)$ expected time.
Moreover, for any non-empty cell, we can retrieve all points lies in it in time linear in the number of such points.

\vspace{0.2cm}
\topic{Linear Time Weighted Median and Selection} % Sub-section
It is well known that finding the weighted median for an array of numbers can be done in deterministic
worst-case linear time.
The setting is as follows:
Given $n$ distinct elements $x_{1},x_{2},...,x_{n}$ with positive weights $w_{1},w_{2},...,w_{n}$.
Let $w=\sum_{i=1}^{n}w_{i}$.
The {\em weighted median} is the element $x_{k}$
satisfying $\sum_{x_{i}<x_{k}}w_{i}<w/2$ and $\sum_{x_{i}>x_{k}}w_{i}\leq w/2$.
Finding the kth smallest elements for any array can also be done in deterministic
worst-case linear time. See e.g., \cite{clrs}.

\vspace{0.2cm}
\topic{An Exact Algorithm for $\MaxRS(\calP, 1)$} As we mentioned previously, Nandy and Bhattacharya \cite{nandy1995unified} provided an $O(n\log n)$ exact algorithm for the $\MaxRS(\calP,1)$ problem. We are going to use this algorithm as a subroutine in our algorithm.
%Roughly speaking, We can think about the area where a rectangle can cover a fixed point. This area it self is a rectangle.
%So we can use a horizontal line to sweep the plane. We maintain a segment tree on the line, adding and remove a point to the segment tree according to the vertical order of the possible covering area. Then the maximum value we ever get in the segment tree is exactly the answer we want. We are not going to present the algorithm precisely here, see[2] for more detail.
%------------------------------------------------
%----------------------------------------------------------------------------------------
%	MAJOR SECTION X - TEMPLATE - UNCOMMENT AND FILL IN
%----------------------------------------------------------------------------------------

\vspace{-0.3cm}

\section{A Linear Time Algorithm for $\MaxRS(\calP,1)$}
\label{sec:m1}

\textbf{Notations:} Without loss of generality, we can assume that $a=b=1$, i.e., all the rectangles are $1\times1$ squares,
(by properly scaling the input).
%We only consider the case of $a=b=1$,i.e. unit squares for the rest of the paper.
We also assume that all points are in general positions.
In particular, all coordinates of all points are distinct.
For a unit square $r$, we use $w(r)$
to denote the sum of the weights of the points covered by $r$.
We say a unit square $r$ is located at $(x,y)$ if the top-left corner of $r$ is $(x,y)$.

Now we present our approximation algorithm for $\MaxRS(\calP,1)$.

\subsection{Grid Shifting} % Sub-section
Recall the definition of a grid $G_{\delta}(a,b)$ (in Section~\ref{sec:prel}).
Consider the following four grids: $G_{2}(0,0)$,$G_{2}(0,1)$,$G_{2}(1,0)$,$G_{2}(1,1)$ with $\delta = 2$.
We can easily see that for any unit square $r$, there exists one of the above grids that does not intersect $r$
(i.e., $r$ is inside some cell of the grid).
This is also the case for the optimal solution.
%Thus, we get the following shifting algorithm.

Now, we describe the overall framework, which is similar to that in~\cite{tao2013approximate}.
Our algorithm differs in several details.
\BlockPartion($\cell$) is a subroutine that takes a $2\times2$ cell $\cell$ as input and returns a unit square $r$
that is a (1-$\varepsilon$)-approximate solution if the problem is restricted to cell $\cell$.
We present the details of \BlockPartion \ in the next subsection.

\vspace{0.3cm}
\begin{algorithm}[h]
  \caption{$\MaxRS(\calP, 1)$}
  \begin{algorithmic}[]
  \State $w_{\max}\leftarrow0$
  \For {each $G\in\{G_{2}(0,0),G_{2}(0,1),G_{2}(1,0),G_{2}(1,1)\}$}
  \State Use perfect hashing to find all the non-empty cells of $G$.
     \For {each non-empty cell $\cell$ of $G$}
     \State $r\leftarrow$ \BlockPartion($\cell$).
     \State {\bf If} $w(r)> w_{\max}$, {\bf then} $w_{\max}\leftarrow w(r)$ and $r_{\max}\leftarrow r$.
     \EndFor;
  \EndFor;
  \State \textbf{return} $r_{\max}$;
  \end{algorithmic}
\label{algo:mainalgo1}
\end{algorithm}
\vspace{-0.5cm}

As we argued above, there exists a grid $G$ such that the optimal solution is inside some cell $\cell^\star \in G$.
Therefore, $\BlockPartion(\cell^\star)$ should return a (1-$\varepsilon$)-approximation for the original problem $\MaxRS(\calP,1)$.

\vspace{-0.3cm}

\subsection{\BlockPartion} % Sub-section

\vspace{-0.2cm}

\label{subsec:partition}
In this section, we present the details of the subroutine \BlockPartion.
Now we are dealing with the problem restricted to a single $2\times2$ cell $\cell$.
%We first find out the number of points inside this block.
Denote the number of point in $\cell$ by $n_{\cell}$, and the sum of the weights of points
in $\cell$ by $W_{\cell}$.
We distinguish two cases, depending on
whether $n_{\cell}$ is larger or smaller than $\left(\frac{1}{\varepsilon}\right)^{2}$.
If $n_{\cell}<\left(\frac{1}{\varepsilon}\right)^{2}$, we simply apply the $O(n\log n)$ time exact algorithm. \cite{nandy1995unified}

The other case requires more work.
In this case, we further partition cell $\cell$ into many smaller cells.
First, we need the following simple lemma.

\begin{lemma}
\label{lm:partition}
Given $n$ points in $\R^2$ with positive weights $w_{1},w_{2},...,w_{n},$ $\sum_{i=1}^{n}w_{i}=w$.
Assume that $x_{1},x_{2},...,x_{n}$ are their distinct $x$-coordinates.
We are also given a value $w_{d}$ such that $\max(w_{1},w_{2},...,w_{n})\leq w_{d} \leq w$,
Then, we can find at most $2w/w_{d}$ vertical lines such that the sum of the weights of points strictly between (we do not count the points on these lines) any two adjacent lines is at most $w_{d}$
in time $O(n\log(w/w_{d}))$.
\end{lemma}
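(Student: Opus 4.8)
The plan is to build the vertical lines greedily from left to right, using a divide-and-conquer strategy to keep the running time logarithmic in $w/w_d$ rather than linear in $n$. The basic idea is this: we want to sweep from the leftmost point to the rightmost, and drop a new vertical line whenever the accumulated weight since the last line is about to exceed $w_d$. Doing this naively by scanning the points in sorted $x$-order would cost $O(n\log n)$ for the sort, which is too slow; instead I would recurse on weighted medians.

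First I would describe the recursive procedure $\textsc{Split}(S)$ that takes a subset $S$ of the points whose total weight is $w(S)$ and returns a set of vertical lines partitioning $S$ into groups each of weight at most $w_d$. If $w(S)\le w_d$, return the empty set of lines. Otherwise, use the linear-time weighted-median algorithm (from the Preliminaries) to find the point $x_k\in S$ that is the weighted median with respect to the $x$-coordinates, split $S$ into $S_{<}=\{x_i\in S: x_i<x_k\}$ and $S_{>}=\{x_i\in S: x_i>x_k\}$ (the median point $x_k$ itself can be attached to either side, or handled separately), recurse on both halves, and insert one vertical line through $x_k$ to separate them. The lines returned are the union of the line at $x_k$ with the recursively produced lines. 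Correctness is immediate by induction: every group produced lies strictly between two adjacent returned lines (or outside the extreme lines) and has weight at most $w_d$ because the recursion only stops when $w(S)\le w_d$.

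Next I would bound the number of lines and the running time. For the count: every line corresponds to an internal node of the recursion tree, and a node recurses only when its weight exceeds $w_d$; since the two children of an internal node have weights summing to at least $w(S)-w_d \ge$ (something comparable to $w(S)$)... here I need the standard charging argument — each leaf has weight at most $w_d$, the leaves partition the total weight $w$, hence there are at most $w/w_d$ leaves, but a binary tree with $L$ leaves has $L-1$ internal nodes, giving fewer than $w/w_d$ lines. Actually the bound $2w/w_d$ in the statement is generous, so even a cruder accounting (e.g. a group could have weight as small as we like, but we only create a line when the parent exceeds $w_d$, and the weighted median guarantees each child has weight at least... ) will suffice; I would just verify the leaf-counting bound gives $\le w/w_d - 1 < 2w/w_d$. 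For the running time: weighted median on a set of size $s$ costs $O(s)$, and the work at each level of the recursion sums to $O(n)$ since the sets at a given depth are disjoint. The depth is $O(\log(w/w_d))$ because the weighted median halves the total weight of each subset at each step (each child has weight at most $w(S)/2 + w_d/2 \le \frac{3}{4}w(S)$ once $w(S) > 2w_d$, or we can simply stop splitting as soon as $w(S)\le w_d$; in any case the weight drops by a constant factor per level until it falls below $w_d$, so the number of levels that actually do work is $O(\log(w/w_d))$). Multiplying, total time is $O(n\log(w/w_d))$.

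The main obstacle I anticipate is the bookkeeping around the weighted median: the weighted median as defined guarantees $\sum_{x_i<x_k}w_i < w/2$ and $\sum_{x_i>x_k}w_i\le w/2$, but the median element $x_k$ itself carries weight $w_k\le w_d$, and points lying exactly on a dropped line are not counted toward any group — so I must be careful that attaching $x_k$ to one side does not push that side's eventual group over $w_d$, and that the recursion depth bound still holds when one child can be as heavy as $w/2$ (essentially balanced) but in a skewed case could retain almost all the weight if $w_d$ is close to $w$. The clean fix is to note that if $w(S)\le 2w_d$ we may split $S$ with a single line and stop (each side then has weight $\le w(S) \le 2 w_d$... which is too much) — so instead I would stop the recursion exactly at $w(S)\le w_d$, accept that the weighted median gives each child weight $\le w(S)/2 + w_d/2$, observe this is $\le \frac{3}{4}w(S)$ whenever $w(S)>2w_d$, and handle the regime $w_d < w(S)\le 2w_d$ by one final weighted-median split producing two children each of weight $\le \frac{3}{4}\cdot 2w_d = \frac{3}{2}w_d$ — still not $\le w_d$. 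This shows the truly careful version needs the split-at-median to be iterated a constant number of times at the bottom, or the lemma's slack factor of $2$ to be spent here; I would resolve it by the leaf-counting argument for the line count (which is robust to all of this) and by noting the depth is $O(\log(w/w_d))$ because weight drops by factor $\frac{3}{4}$ per level down to $O(w_d)$, then $O(1)$ more levels finish the job.
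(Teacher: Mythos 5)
Your algorithm is the same as the paper's (recursive weighted-median splitting, one line per split, stop when a piece has weight at most $w_d$), but your proof has a genuine gap that traces back to how you treat the median point. The lemma explicitly does \emph{not} count points lying on the chosen lines, and that clause is exactly what makes the analysis clean: put the vertical line \emph{through} the weighted median point $x_k$ and exclude $x_k$ from both recursive calls. By the definition of the weighted median, each child then has weight at most $w(S)/2$, full stop. Hence every internal node of the recursion tree lies at depth at most $\lceil\log(w/w_d)\rceil-1$, the number of lines (one per internal node of a binary tree) is at most $2^{\lceil\log(w/w_d)\rceil}-1\le 2w/w_d$, and the time is $O(n)$ per level times $O(\log(w/w_d))$ levels. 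You instead attach the median to one side and obtain only $w(S)/2+w_d/2$ per child; your own computation then shows children of weight up to $\tfrac32 w_d$ in the bottom regime, which must be split again. Since the map $x\mapsto x/2+w_d/2$ has fixed point $w_d$, the excess over $w_d$ merely halves per level and never reaches zero in the worst case, so ``$O(1)$ more levels finish the job'' is not justified: the number of extra levels depends on the granularity of the weights, not on $w/w_d$, and with it both your depth bound (hence running time) and your $2^{\text{depth}}$-style line count collapse. The exclusion of the median point is not bookkeeping; it is the step that makes the lemma true as stated.

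Separately, your primary argument for the number of lines --- ``each leaf has weight at most $w_d$, the leaves partition $w$, hence at most $w/w_d$ leaves'' --- is logically invalid: an \emph{upper} bound on each leaf's weight gives no upper bound on the number of leaves, which can have arbitrarily small weight. You sense this and fall back on ``crude accounting,'' but the only robust count here is the depth-based one above, which in turn requires the exact halving you gave up. Once the median point is excluded, everything you wrote about the per-level $O(n)$ work and the $O(n\log(w/w_d))$ total time goes through verbatim.
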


\vspace{-0.5cm}

\begin{algorithm}[t]
  \caption{\Partition($\{x_{1},x_{2},...,x_{n}\}$)}
  \begin{algorithmic}[]
  \State Find the weighted median $x_{k}$ (w.r.t. $w$-weight);
  \State $\calL=\calL\cup\{x_{k}\}$;
  \State Generate $S=\{x_{i}\mid w_{i}<x_{k}\}$, $L=\{x_{i}\mid w_{i}>x_{k}\}$;
  \State If the sum of the weights of the points in $S$ is lager than $w_{d}$, run \Partition(S);
  \State If the sum of the weights of the points in $L$ is lager than $w_{d}$, run \Partition(L);
  \end{algorithmic}
  \label{algo:partition}
\end{algorithm}

\begin{proof}
See Algorithm~\ref{algo:partition}.
In this algorithm, we apply the weighted median algorithm recursively.
Initially we have a global variable $\calL=\emptyset$, which upon termination is
the set of $x$-coordinates of the selected vertical lines.
Each time we find the weighted median $x_{k}$ and separate the point with the vertical line $x=x_{k}$,
which we add into $\calL$.
The sum of the weights of points in either side is at most half of the sum of the weights of all the points.
Hence, the depth of the recursion is at most $\lceil\log(w/w_{d})\rceil$.
Thus, the size of $\calL$ is at most $2^{\lceil\log(w/w_{d})\rceil}\leq 2w/w_{d}$,
and the running time is $O(n\log(w/w_{d}))$.
\qed
\end{proof}

Now, we describe how to partition cell $\cell$ into smaller cells.
First we partition $\cell$ with some vertical lines.
Let $\calL_v$ to denote a set of vertical lines. Initially, $\mathcal{L}=\emptyset$.
Let $w_{d}=\frac{\varepsilon\cdot W_{\cell}}{16}$.
We find all the points whose weights are at least $w_d$.
For each such point, we add to $\calL_v$ the vertical line that passes through the point.
Then, we apply Algorithm~\ref{algo:partition}
to all the points with weights less than $w_d$.
Next, we add a set $\calL_h$ of horizontal lines in exactly the same way.

\begin{lemma}
The sum of the weights of points strictly between any two adjacent lines in $\calL_v$ is at most $w_d=\frac{\varepsilon\cdot W_{\cell}}{16}$.
The number of vertical lines in $\calL_v$ is at most $\frac{32}{\varepsilon}$.
Both statements hold for $\calL_h$ as well.
\end{lemma}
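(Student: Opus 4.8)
The plan is to prove the two assertions about $\calL_v$ (those for $\calL_h$ following verbatim after exchanging the two coordinate roles). Write $\calL_v = H \cup P$, where $H$ is the set of vertical lines drawn through the \emph{heavy} points (those of weight at least $w_d$) and $P$ is the set of vertical lines returned by \Partition\ run on the \emph{light} points (those of weight strictly below $w_d$); let $w'$ be the total weight of the light points, so $w' \le W_\cell$.

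For the first assertion I would begin with the observation that any point $p$ lying strictly between two adjacent lines of $\calL_v$ is necessarily light: if $w_p \ge w_d$ then the vertical line $x = x_p$ belongs to $H \subseteq \calL_v$ and lies strictly between the two lines, contradicting adjacency. Hence it suffices to bound the light weight inside each maximal open vertical strip determined by $\calL_v$. Such a strip contains no line of $P$ (again by adjacency), so it lies inside a single region of the arrangement of $P$. For a bounded region this is exactly an ``interval between two adjacent \Partition-lines'', whose light weight is at most $w_d$ by Lemma~\ref{lm:partition}; for the two unbounded regions (left of all \Partition-lines, resp.\ right of all of them) the light weight is again at most $w_d$, because \Partition\ only stops recursing on a sublist once that sublist's total weight has dropped to $w_d$ or below, so the leftmost and rightmost produced lines bound a region of weight at most $w_d$. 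Consequently every strip of $\calL_v$ carries light weight — hence total weight — at most $w_d$.

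For the second assertion I would bound $|H|$ and $|P|$ jointly rather than separately. Since the points of $\cell$ have total weight $W_\cell$ and each heavy point has weight at least $w_d = \varepsilon W_\cell/16$, there are at most $W_\cell/w_d = 16/\varepsilon$ heavy points, so $|H| \le 16/\varepsilon$; moreover these heavy points already account for weight at least $|H|\,w_d$, leaving $w' \le W_\cell - |H|\,w_d$. Plugging this into Lemma~\ref{lm:partition} yields $|P| \le 2w'/w_d \le 2W_\cell/w_d - 2|H| = 32/\varepsilon - 2|H|$ (the case $w' \le w_d$, in which \Partition\ produces no line, is trivial). Adding the two counts, $|\calL_v| = |H| + |P| \le |H| + (32/\varepsilon - 2|H|) = 32/\varepsilon - |H| \le 32/\varepsilon$, as claimed.

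The routine ingredients are the two invocations of Lemma~\ref{lm:partition} and the crude count of heavy points. The one step that needs care — and the reason the final bound is $32/\varepsilon$ rather than the naive $16/\varepsilon + 32/\varepsilon$ — is the joint accounting in the last paragraph: every heavy line must be paid for out of the \emph{same} weight budget $W_\cell$ that governs the number of \Partition-lines, so drawing a heavy line removes at least two potential \Partition-lines. A secondary subtlety, dealt with above, is that two lines adjacent in $\calL_v$ need not be adjacent in $P$, so Lemma~\ref{lm:partition} must be complemented by the (immediate) bound on the two unbounded regions of the \Partition-arrangement.
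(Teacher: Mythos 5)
Your proof is correct and follows essentially the same route as the paper: the count $|\calL_v|\le W_1/w_d + 2W_2/w_d \le 2W_\cell/w_d = 32/\varepsilon$ is exactly the paper's joint accounting over heavy and light weight, and your first part simply fills in the details (heavy points cannot lie strictly inside a strip; bounded versus unbounded regions of the \Partition{} arrangement) that the paper dismisses as straightforward. No gaps worth flagging.
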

\begin{proof}
%Obviously, the process of our algorithm guarantees the sum of the weights of points
%between any two adjacent lines is at most $\frac{\varepsilon\cdot W{c}}{16}$.
The first statement is straightforward from the description of the algorithm.
We only need to prove the upper bound of the number of the vertical lines.
Assume the sum of the weights of those points considered in the first (resp. second) step is $W_{1}$(resp. $W_{2}$), $W_{1}+W_{2}=W_{\cell}$.
The number of vertical lines in $\calL_v$ is at most
\begin{equation*}
W_{1}/\left(\frac{\varepsilon\cdot W_{\cell}}{16}\right)+
2W_{2}/\left(\frac{\varepsilon\cdot W_{\cell}}{16}\right)
\leq \frac{32}{\varepsilon}.
\end{equation*}
The first term is due to the fact that the weight of each point we found in the first step has weight at least $\frac{\varepsilon\cdot W_{\cell}}{16}$, and
the second term directly follows from Lemma~\ref{lm:partition}. \qed
\end{proof}

We add both vertical boundaries of cell $\cell$ into $\calL_v$
and both horizontal boundaries of cell $\cell$ into $\calL_h$.
Now $\calL=\calL_v\cup \calL_h$
forms a grid of size at most $(\frac{32}{\varepsilon}+2) \times (\frac{32}{\varepsilon}+2)$.
Assume
$\calL=\{(x,y)\in\mathbb{R}^{2} \mid y=y_{j},j\in \{1,...,u\}\}\cup\{(x,y)\in\mathbb{R}^{2}\mid x=x_{i},i\in \{1,...,v\}\}$,
with both $\{y_{i}\}$ and $\{x_{i}\}$ are sorted.
$\calL$ partitions $\cell$ into {\em small cells}.
The final step of our algorithm is simply enumerating all the unit squares located at $(x_{i},y_{j}),i\in \{1,...,u\},j\in \{1,...,v\}$,
and return the one with the maximum coverage.
However, computing the coverage exactly for all these unit squares is expensive.
Instead, we only calculate the weight of these unit square approximately as follows.
For each unit square $r$, we only count the weight of points that are in some small cell fully covered by $r$.
Now, we show this can be done in $O\left(n_{\cell}\log \left(\frac{1}{\varepsilon}\right)+\left(\frac{1}{\varepsilon}\right)^{2}\right)$ time.

After sorting $\{y_{i}\}$ and $\{x_{i}\}$, we can use binary search to identify which small cell each point lies in.
So we can calculate the sum of the weights of points at the interior, edges or corners of all small cells
in  $O(n_{\cell}\log \left(\frac{1}{\varepsilon}\right))$ times.

Thus searching the unit square with the maximum (approximate) coverage
can be done with a standard incremental algorithm in $O\left(\frac{1}{\varepsilon}\right)^{2}$ time. 

For completeness, we provide the details as follows.
In fact, the problem can be reduced to the following problem:
We have a $k\times k$ grid, each cell $(i,j)$ has a given weight $w(i,j)$.
We are also given two non-decreasing integer function
\footnote{[k]=\{1,2, ... ,k\}}
$X:[k]\rightarrow [k],Y: [k]\rightarrow [k]$
such that $X(i)\geq i, Y(i)\geq i$ for all $i\in [k]$.
The goal is to find $i^\star,j^\star$ such that
\begin{equation*}
I(i, j)=\sum_{i\leq i'\leq X(i)}\sum_{j\leq j'\leq Y(j)}w_{i',j'}
\end{equation*}
is maximized.
First,we define an auxiliary function $H(i,j)=\sum_{j\leq j'\leq Y(j)}w_{i,j'}$.
This can be computed separately for each column. We maintain two pointers $p$ and $q$, such that
$q=Y(p)$ holds through the process.
Initially, $p=1$ and $q=Y(1)$. In each iteration, we increment $p$ by 1 and $q$ by $Y(p+1)-Y(p)$.
Each $w(i,j)$ will be encountered at most twice.
So we can calculate all $H(i,j)$ values in $O\left(\frac{1}{\varepsilon}\right)^{2}$ time.
Then in the same way, We can use $H(i,j)$ values to calculate all $I(i,j)$ values in exactly the same way.
This takes $O\left(\frac{1}{\varepsilon}\right)^{2}$ time as well.
We return the maximum $I(i,j)$ value as the result for \BlockPartion($\cell$).

Putting everything together,
we conclude that if $n_{\cell}\geq\left(\frac{1}{\varepsilon}\right)^{2}$,
the running time of \BlockPartion($\cell$)\ is
$O\left(n_c\log \left(\frac{1}{\varepsilon}\right)+\left(\frac{1}{\varepsilon}\right)^{2}\right).$
We can conclude the main result of this section with the following theorem.

\begin{theorem}
	\label{thm:main1}
Algorithm~\ref{algo:mainalgo1} returns a (1-$\varepsilon$)-approximate answer
for $\MaxRS(\calP, 1)$
in $O(n\log \left(\frac{1}{\varepsilon}\right))$ time.
\end{theorem}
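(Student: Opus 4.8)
The plan is to verify two things: correctness of the approximation ratio, and the running time bound. For correctness, I would first argue via the shifting lemma already established: among the four grids $G_2(0,0), G_2(0,1), G_2(1,0), G_2(1,1)$, at least one has a cell $\cell^\star$ that fully contains the optimal unit square. Since Algorithm~\ref{algo:mainalgo1} iterates over all non-empty cells of all four grids and calls \BlockPartion\ on each, it suffices to show \BlockPartion($\cell^\star$) returns a $(1-\varepsilon)$-approximation to the best unit square inside $\cell^\star$ — and the overall returned square has weight at least that, since the algorithm keeps the global maximum over all cells (and the true weight $w(r)$, not the approximate one, is used for the comparison in the main loop). The key estimate is that when $n_{\cell}\geq(1/\varepsilon)^2$, the approximate coverage (counting only points in small cells fully covered by $r$) underestimates the true coverage by at most the weight of points lying in the "boundary strip" of $r$: points in small cells that $r$ intersects but does not fully contain. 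A unit square spans at most a bounded number of vertical and horizontal slabs of the small-cell grid, and by the second lemma each gap between adjacent lines in $\calL_v$ (resp. $\calL_h$) carries weight at most $\varepsilon W_{\cell}/16$; summing the at-most-two partially covered column-strips and two partially covered row-strips gives a loss of at most $4\cdot \varepsilon W_{\cell}/16 = \varepsilon W_{\cell}/4$. Since the optimal square inside $\cell$ has weight at least... here I must be a little careful: I would compare against $\OPT_{\cell}$, the optimal coverage inside $\cell$, and note the loss $\varepsilon W_{\cell}/4$ need not be small relative to $\OPT_{\cell}$ in general — but for the cell $\cell^\star$ containing the global optimum we can afford to lose $\varepsilon$ times the global optimum, and a cleaner route is to observe the lines through heavy points (weight $\geq w_d$) are included in $\calL_v,\calL_h$, so a partially-covered strip contains only light points and moreover the relevant bound should be stated against $\OPT$ of the whole instance. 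I would reconcile this by using that $W_{\cell^\star}\le 4\cdot\OPT$ is false in general, so instead I'd argue the loss per strip is $\le \varepsilon W_{\cell}/16$ and we only need $\BlockPartion(\cell^\star)\ge \OPT - \varepsilon\,\OPT$; since any unit square in $\cell^\star$ realizing $\OPT$ has the at-most-four boundary strips with total light-point weight $\le \varepsilon W_{\cell^\star}/4$, and... the honest fix is that one should take $w_d = \varepsilon\cdot W_\cell/c$ for a large enough constant $c$ and note $\OPT\ge$ (weight of heaviest point) while the partially covered strips can be charged so that the additive error is at most $\varepsilon\,\OPT$; I will spell this out using that the approximate value plus the boundary-strip weight equals the true value, and that the true optimal square's boundary-strip weight is at most a constant times $w_d = \Theta(\varepsilon W_\cell)$, which is $O(\varepsilon)$ fraction — and here I'd want $\OPT_{\cell}\ge \Omega(W_\cell)$ to fail gracefully, so actually the right statement is just that we lose at most $\varepsilon\cdot\OPT$ globally because the four boundary strips of the optimal square have weight at most $\varepsilon\,\OPT$ once $w_d$ is chosen as $\varepsilon\cdot W_\cell/16$ and $W_\cell \le$ (a bounded multiple of the weight coverable in $\cell$)... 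I expect this charging argument to be the main obstacle and will present it carefully.

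Next, for the running time I would sum contributions over the four grids; by the perfect-hashing preliminary, enumerating non-empty cells and retrieving their points takes $O(n)$ expected time per grid. For each non-empty cell $\cell$: if $n_{\cell}<(1/\varepsilon)^2$ we run the exact $O(n_\cell\log n_\cell)$ algorithm, which is $O(n_\cell\log(1/\varepsilon))$ since $\log n_\cell = O(\log(1/\varepsilon))$; if $n_{\cell}\ge(1/\varepsilon)^2$ we showed \BlockPartion\ runs in $O(n_\cell\log(1/\varepsilon)+(1/\varepsilon)^2)$ time, and the additive $(1/\varepsilon)^2$ is dominated by $n_\cell$, hence by $n_\cell\log(1/\varepsilon)$. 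Summing $\sum_\cell n_\cell = n$ over the cells of one grid gives $O(n\log(1/\varepsilon))$, and multiplying by the constant number $4$ of grids preserves the bound. The \Partition\ subroutine's cost $O(n_\cell\log(w/w_d)) = O(n_\cell\log(1/\varepsilon))$ is already subsumed. I would also note the binary-search step assigning points to small cells costs $O(n_\cell\log(1/\varepsilon))$ and the incremental search over the $O((1/\varepsilon)^2)$ candidate squares costs $O((1/\varepsilon)^2)$, both consistent with the claimed per-cell bound. Putting it together yields the total $O(n\log(1/\varepsilon))$, completing the proof.

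The main obstacle, as flagged above, is pinning down the approximation guarantee cleanly: specifically, showing that restricting attention to points in fully-covered small cells loses only an $\varepsilon$-fraction of $\OPT$ rather than an $\varepsilon$-fraction of $W_{\cell}$, which requires relating $W_{\cell^\star}$ to $\OPT$ (or, alternatively, arguing directly that the four boundary strips of the optimal square inside $\cell^\star$ carry total weight at most $\varepsilon\,\OPT$). I would handle this by choosing the threshold $w_d=\varepsilon W_{\cell}/16$ so that each of the at most two partially-covered vertical strips and two partially-covered horizontal strips of the optimal square has weight at most $w_d$, for a total additive error at most $4w_d=\varepsilon W_{\cell}/4$; then I must bound $W_{\cell}$ against $\OPT$. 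Since this need not hold cell-by-cell, the cleanest argument is global: it is enough that the algorithm's output is within $\varepsilon\,\OPT$ additively, and the at-most-four boundary strips of the true optimal square have light points only (heavy points' lines are in $\calL$), so their combined weight is at most a constant times $w_d$; choosing the constant in $w_d$ appropriately (e.g., replacing $16$ by a larger constant if needed) drives this below $\varepsilon\cdot W_{\cell^\star}$, and since the optimal square lies in $\cell^\star$ while any single unit square covers weight at most $\OPT$, one shows $W_{\cell^\star}$ cannot exceed $O(1/\varepsilon^{?})\cdot\OPT$ is the wrong track — instead I will simply observe that we may assume $\OPT \ge \varepsilon\,W_{\cell^\star}/4$ (otherwise every unit square, including the empty placement's competitors, is handled), or more precisely that the optimal square's boundary-strip weight is bounded by the sum of four inter-line gaps each $\le \varepsilon W_{\cell^\star}/16$, which I then argue is $\le \varepsilon\,\OPT$ because $\OPT\ge$ the weight of the optimal square $\ge$ ... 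I will present the final writeup choosing constants so that this inequality holds, which is the delicate bookkeeping step.
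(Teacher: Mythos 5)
Your running-time analysis is correct and matches the paper's: small cells ($n_{\cell}<(1/\varepsilon)^2$) cost $O(n_\cell\log n_\cell)=O(n_\cell\log(1/\varepsilon))$ via the exact algorithm, large cells cost $O(n_\cell\log(1/\varepsilon)+(1/\varepsilon)^2)$ with the additive term absorbed into $n_\cell$, and summing $\sum_\cell n_\cell=n$ over the four grids gives the claimed bound. The structure of the correctness argument (shifting to a non-intersected grid, then bounding the loss of \BlockPartion\ on $\cell^\star$ by at most four inter-line gaps, each of weight at most $w_d=\varepsilon W_\cell/16$, for a total additive error of $\varepsilon W_\cell/4$) is also the paper's argument.

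However, there is a genuine gap at exactly the point you flag as ``the main obstacle'': relating $W_{\cell}$ to $\OPT$. You write that ``$W_{\cell^\star}\le 4\cdot\OPT$ is false in general'' and then abandon this route, cycling through several alternatives none of which closes the argument. In fact that inequality is \emph{true}, and it is the one missing observation. Each cell $\cell$ of the grids $G_2(\cdot,\cdot)$ has size $2\times 2$, so it can be tiled by exactly four unit squares; one of these four covers weight at least $W_\cell/4$, hence the best unit square inside $\cell$ has weight at least $W_\cell/4$. Applied to $\cell^\star$, which contains the global optimum, this gives $\OPT\ge W_{\cell^\star}/4$, and therefore the additive error $4w_d=\varepsilon W_{\cell^\star}/4\le \varepsilon\,\OPT$. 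No re-tuning of the constant $16$, no separate treatment of heavy versus light points in the boundary strips, and no global charging argument is needed --- the bound holds cell-by-cell. Without this observation your proof does not close, since everything else in your writeup is conditional on resolving precisely this step.
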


\begin{proof}
We only need to prove that \BlockPartion($\cell$)
returns a (1-$\varepsilon$)-approximation for cell $\cell$.
The case $n_{\cell}<\left(\frac{1}{\varepsilon}\right)^{2}$ is trivial since we apply the exact algorithm.
So we only need to prove the case of $n_{\cell}\geq\left(\frac{1}{\varepsilon}\right)^{2}$.

\begin{figure}[t]
\centering
\includegraphics[width=0.4\textwidth]{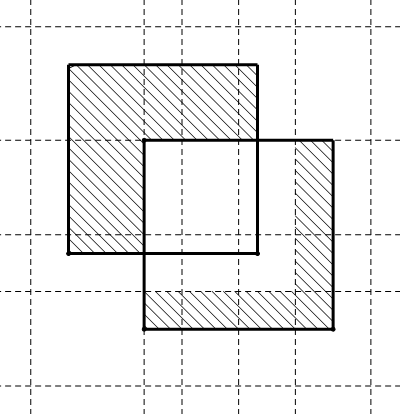}
\caption{Proof of Theorem~\ref{thm:main1}. Difference between the
	optimal solution(the top-left one) and our solution(the lower-right one)}
\label{fig:1}
\end{figure}

Suppose the optimal unit square is $r$. Denote by $\OPT$ the weight of the optimal solution.
The size of $\cell$ is $2\times 2$, so we can use $4$ unit squares to cover the entire cell.
Therefore, $\OPT\geq(\frac{W_{\cell}}{4})$.
Suppose $r$ is located at a point $p$, which is in the strict interior of a small cell $B$ separated by $\calL$.
\footnote{If $p$ lies on the boundary of $B$, the same argument still works.}
Suppose the index of $B$ is $(i,j)$.
We compare the weight of $r$ with $I(i,j)$ (which is the approximate weight of the unit square located at the top-left corner of $B$).
See Figure~\ref{fig:1}.
By the rule of our partition, the weight difference is at most $4$ times the maximum possible weight of points between two adjacent lines in $\calL$.
So $I(i,j)\geq \OPT-4\cdot \frac{\varepsilon\cdot W_{\cell}}{16}\geq (1-\varepsilon)\OPT$.
This proves the approximation guarantee of the algorithm.

Now, we analyze the running time.
 The running time consists of two parts:
 cells with number of points more than $\left(\frac{1}{\varepsilon}\right)^{2}$ and
 cells with number of points less than $\left(\frac{1}{\varepsilon}\right)^{2}$.
 Let $n_{1}\geq n_{2}\geq,...,\geq n_{j}\geq \left(\frac{1}{\varepsilon}\right)^{2}>n_{j+1}\geq n_{j+2},...,\geq n_{j+k}$
 be the sorted sequence of the number of points in all cells. Then, we have that
\begin{align*}
\text{Running time} \leq & \sum_{i=1}^{j}O\left(n_{i}\log \left(\frac{1}{\varepsilon}\right)+\left(\frac{1}{\varepsilon}\right)^{2}\right)+\sum_{i=1}^{k}O\left(n_{i+j}\log(n_{i+j})\right) \\
=
& O\left(\log \left(\frac{1}{\varepsilon}\right)\sum_{i=1}^{j}n_{i}+j\left(\frac{1}{\varepsilon}\right)^{2}+\sum_{i=1}^{k}n_{i+j}\log(n_{i+j})\right) \\
\leq & O\left(\log \left(\frac{1}{\varepsilon}\right)\sum_{i=1}^{j}(n_{i})+n+\sum_{i=1}^{k}n_{i+j}\log\left(\frac{1}{\varepsilon}\right)\right) \\
=    & O\left(\log \left(\frac{1}{\varepsilon}\right)\sum_{i=1}^{j+k}(n_{i})+n\right)
=O\left(n\log \left(\frac{1}{\varepsilon}\right)\right).
\end{align*}
This completes the proof. \qed
\end{proof}

\vspace{-0.3cm}

\section{Linear Time Algorithms for $\MaxRS(\calP,m)$} % Major section
\label{sec:generalm}

\subsection{Grid Shifting}
For general $m$, we need the
shifting technique \cite{hochbaum1985approximation}.
Consider grids with a different side length:
$G_{6/\varepsilon}(a,b)$.
We shift the grid to $\frac{6}{\varepsilon}$ different positions: $(0,0),(1,1),....,(\frac{6}{\varepsilon}-1,\frac{6}{\varepsilon}-1)$.
(For simplicity, we assume that $\frac{1}{\varepsilon}$ is an integer and no point in $\calP$ has an integer coordinate, so points in $\calP$ will never lie on the grid line.
Let $$
\mathbb{G}=\left\{G_{6/\varepsilon}(0,0),...,G_{6/\varepsilon}(6/\varepsilon-1,6/\varepsilon-1)\right\}.
$$
Then we have the following lemma. 
\begin{lemma}
\label{shifting}
There exist $G^\star\in\mathbb{G}$ and a $(1-\frac{2\varepsilon}{3})$-approximate solution $R$
such that none of the unit squares in $R$ intersects $G^\star$.
\end{lemma}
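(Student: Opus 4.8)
The plan is to run the standard shifting argument directly on an optimal placement, charging the loss caused by a given shift to the grid lines that slice through the optimal squares. Fix an optimal solution $R^{\mathrm{opt}}=\{r_1,\dots,r_m\}$ and write $\OPT=\Cover(\calP,\bigcup_j r_j)$. For a shift $i\in\{0,1,\dots,6/\varepsilon-1\}$ let $B_i\subseteq R^{\mathrm{opt}}$ be the set of squares that intersect $G_{6/\varepsilon}(i,i)$, and set $R^{(i)}:=R^{\mathrm{opt}}\setminus B_i$. By construction no square of $R^{(i)}$ meets $G_{6/\varepsilon}(i,i)$ and $|R^{(i)}|\le m$, so it suffices to exhibit an index $i^\star$ for which $R^{(i^\star)}$ still covers weight at least $(1-\tfrac{2\varepsilon}{3})\OPT$; then $G^\star=G_{6/\varepsilon}(i^\star,i^\star)$ and $R=R^{(i^\star)}$ prove the lemma.

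Next I would establish a \emph{locality} estimate. If a point $p\in\calP$ is covered by $R^{\mathrm{opt}}$ but not by $R^{(i)}$, then every optimal square containing $p$ lies in $B_i$; in particular $p$ lies in some unit square crossing a line of $G_{6/\varepsilon}(i,i)$, and since that square has side $1$ this forces $x_p$ (resp.\ $y_p$) to be within distance $1$ of a vertical (resp.\ horizontal) line of $G_{6/\varepsilon}(i,i)$. I then bound how many shifts $i$ can be ``bad'' for a fixed $p$ in this sense. Because $6/\varepsilon$ is an integer, as $i$ ranges over $\{0,\dots,6/\varepsilon-1\}$ the vertical lines of the shifted grids together occupy exactly the integer abscissae, so $i$ is bad for $p$ via a vertical line only when $i\equiv n\pmod{6/\varepsilon}$ for some integer $n\in[x_p-1,x_p+1]$. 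Using the hypothesis that no point of $\calP$ has an integer coordinate, this length-$2$ interval contains exactly two integers, and since $2<6/\varepsilon$ they give two distinct shift values; the horizontal lines contribute at most two more, for a total of at most $4$ bad shifts per point.

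Summing the locality estimate over the points covered by $R^{\mathrm{opt}}$ and swapping the order of summation yields
$$
\sum_{i=0}^{6/\varepsilon-1}\Bigl(\OPT-\Cover\bigl(\calP,\textstyle\bigcup_{r\in R^{(i)}}r\bigr)\Bigr)\;\le\;\sum_{p\in\calP\cap\bigcup_j r_j} w_p\cdot\#\{\,i:\ i\text{ is bad for }p\,\}\;\le\;4\,\OPT ,
$$
so by averaging some $i^\star$ satisfies $\OPT-\Cover(\calP,\bigcup_{r\in R^{(i^\star)}}r)\le\frac{4}{6/\varepsilon}\OPT=\frac{2\varepsilon}{3}\OPT$, which is exactly the claimed guarantee.

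The only delicate point, and the one I would write out carefully, is the per-point constant: the argument needs ``at most two bad vertical shifts and at most two bad horizontal shifts'', and this is precisely where the general-position assumption (no integer coordinates) is used, since without it a length-$2$ interval could contain three integers and the bound would degrade to $(1-\varepsilon)$. The remaining ingredients — feasibility of $R^{(i^\star)}$, monotonicity of coverage under deleting squares, and the averaging step — are routine and I would state them only briefly.
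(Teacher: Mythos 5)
Your proof is correct and achieves the required $(1-\frac{2\varepsilon}{3})$ bound, but the charging scheme is genuinely different from the paper's. The paper first replaces the optimal solution by one in which every covered point lies in at most $4$ of the chosen squares, then for each shift \emph{moves} each grid-crossing square to the heavier side (losing at most $1/2$ or $3/4$ of its weight according to whether it crosses one line or two), and bounds the total loss summed over all $6/\varepsilon$ shifts by the sum of the squares' individual weights, which the multiplicity bound caps at $4\,\OPT$. You instead simply \emph{delete} the crossing squares and charge the loss pointwise: a point can be lost only at shifts whose grid passes within distance $1$ of it, and the no-integer-coordinates assumption caps this at $2+2=4$ shifts per point, giving the same $4\,\OPT$ total before averaging. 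Your route avoids both the bounded-multiplicity preprocessing (which the paper asserts rather tersely) and the case analysis on how a square meets the grid; the only cosmetic difference is that your $R^{(i^\star)}$ may contain fewer than $m$ squares, which is harmless since any padding squares can be placed coincident with surviving ones and thus still avoid $G^\star$. The per-point count of bad shifts and the averaging step are handled correctly, so I see no gap.
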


\begin{figure}[t]
\centering
\includegraphics[width=0.6\textwidth]{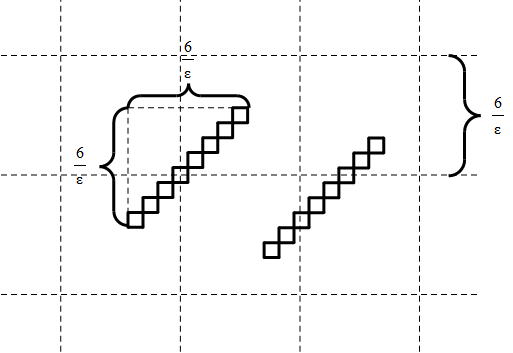}
\caption{Proof of Lemma~\ref{shifting}: the shifting technique.}
\label{trace}
\end{figure}

\begin{proof}
	For any point $p$, we can always use four unit squares to cover the $2\times2$ square centered at $p$.
 %and thus covers all the possible area for a unit square to cover $p$.
	Therefore, there exists an optimal solution $OPT$ such that each covered point is cover by at most 4 unit squares in $OPT$.
	For each grid $G_{\frac{6}{\varepsilon}}(i,i)\in\mathbb{G}$, we build a modified answer $R_{i}$ from OPT in the following way. For each square $r$ that intersects with $G_{\frac{6}{\varepsilon}}(i,i)$, there are two different situations. If $r$ only intersects with one vertical line or one horizontal line. We move the square to one side of the line with bigger weight. In this case we will lose at most half of the weight of $r$. Notice that this kind of squares can only intersect with two grids in $\mathbb{G}$.
	Similarly, If $r$ intersects with one vertical line and one horizontal line at the same time, we move it to one of the four quadrants derived by these two lines. In this case we will lose at most 3/4 of the weight of $r$. This kind of squares can only intersect with one grid in $\mathbb{G}$. (see Figure~\ref{trace})
	Now we calculate the sum of the weights we lose from $R_{0},R_{2},...,R_{\frac{6}{\varepsilon}-1}$, which is at most $\max\{1/2\times2,3/4\times1\}=1$ times the sum of weights of squares in $OPT$. By the definition of $OPT$, it is at most $4w(OPT)$. So the sum of the weights of $R_{0},R_{2},...,R_{\frac{6}{\varepsilon}-1}$ is at least
	$(\frac{6}{\varepsilon}-4)w(OPT)$. Therefore there exists some $i$ such that $R_{i}$(which does not intersect $G_{\frac{6}{\varepsilon}}(i,i)$) is a $(1-\frac{2\varepsilon}{3})$ approximate answer.
	\qed
\end{proof}

We present a subroutine in section~\ref{subsec:F} which can approximately solve the problem for a grid,
and apply it to each non-empty grid in $\mathbb{G}$.
Then, in order to compute our final output from those obtained solutions, we apply a dynamic programming algorithm or a greedy algorithm which are shown in the next two sections.

\subsection{Dynamic Programming}
\label{subsec:dp}

Now consider a fixed grid $G\in \mathbb{G}$.
Let $\cell_{1}, \ldots , \cell_{t}$ be the cells of grid $G$
and $\OPT$ be the optimal solution that does not intersect $G$.
Obviously, $(\frac{6}{\varepsilon})^{2}$ unit squares are enough to cover an entire $\frac{6}{\varepsilon}\times\frac{6}{\varepsilon}$ cell.
Thus the maximum number of unit squares we need to place in one single cell is $m_{c}=\min\{m,(\frac{6}{\varepsilon})^{2}\}$.

Let $\OPT(\cell_i,k)$ be the maximum weight we can cover with $k$ unit squares in cell $\cell_i$.
For each nonempty cell $\cell_i$ and for each $k\in[m_{c}]$, we find a $(1-\frac{\varepsilon}{3})$-approximation $\F(\cell_i,k)$ to $\OPT(\cell_i,k)$. We will show how to achieve this later. Now assume that we can do it.

Let $\OPT_{\F}(m)$ be the optimal solution we can get from the values $\F(\cell_i,k)$.
More precisely,
\begin{align}
\label{eq:opt}
\OPT_{\F}(m)=\max_{k_1,\ldots,k_t \in [m_{c}]}
\left\{\sum_{i=1}^t \F(\cell_i,k_i)
	\,\,\Big\rvert\,\, \sum_{i=1}^{t} k_i=m\right\}
\end{align}
We can see that
$\OPT_{\F}(m)$ must be a $(1-\frac{\varepsilon}{3})$-approximation to $\OPT$.
We can easily use dynamic programming to calculate
the exact value of $\OPT_{\F}(m)$.
Denote by $A(i,k)$ the maximum weight we can cover with $k$ unit squares in cells $\cell_{1},\cell_{2},...,\cell_i$.
We have the following DP recursion:
\begin{equation*}
A(i,k)=\left\{
\begin{array}{lcl}
\max_{j=0}^{\min(k,m_{c})}
\left\{A(i-1,k-j)+\F(\cell_i,j)\right\}  &      &  {\text{if} \quad i>1}\\
\F(c_{1},k)  &      &  {\text{if} \quad i=1}\\
\end{array}\right.
\end{equation*}

The running time of the above simple dynamic programming is $O(m^{2}\cdot m_{c})$.
One may notice that each step of the DP is computing a $(+,\max)$ convolution.
However, existing algorithms (see e.g., \cite{bremner2006necklaces,williams2014faster}) only run slightly better than quadratic time. So the improvement would be quite marginal.
 But in the next section, we show that if we would like to settle for an
 approximation to $\OPT_{\F}(m)$,
 the running time can be dramatically improved to linear.

\subsection{A Greedy Algorithm}
\label{subsec:greedy}
We first apply our $\MaxRS(\calP, 1)$ algorithm in Section~\ref{sec:m1}
to each cell $\cell_i$,
to compute a $(1-\frac{\varepsilon^{2}}{9})$-approximation of
 $\OPT(\cell_i,1)$. Let $f(\cell_i,1)$ be the return values.
\footnote{
Both $f(\cell_i,1)$ and $\F(\cell_i,1)$ are approximations of $\OPT(\cell_i,1)$,
with slightly different approximation ratios.
}
This takes $O(n\log{\frac{1}{\varepsilon}})$ time.
Then, we use the selection algorithm to find out
the $m$ cells with the largest $f(\cell_i,1)$ values.
Assume that those cells are $\cell_{1}, ..., \cell_{m}, \cell_{m+1},...,\cell_{t}$,
sorted from largest to smallest by $f(\cell_i,1)$.

\begin{lemma}
\label{selectm}
Let $\OPT'$ be the maximum weight we can cover using $m$ unit squares in $\cell_{1}, ..., \cell_{m}$. Then
$\OPT'\geq(1-\frac{\varepsilon^{2}}{9})\OPT$
\end{lemma}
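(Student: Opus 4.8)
The plan is to exhibit one concrete placement of $m$ unit squares inside $\cell_1,\dots,\cell_m$ whose total weight is at least $(1-\frac{\varepsilon^{2}}{9})\OPT$; since $\OPT'$ is by definition the best such placement, this proves the lemma. Fix the optimal solution achieving $\OPT$. Since none of its squares crosses a line of $G$, it decomposes cell by cell, so $\OPT=\sum_{i}\OPT(\cell_i,k_i)$, where $k_i$ is the number of its squares lying in $\cell_i$ and $\sum_i k_i=m$. Call $\cell_i$ \emph{good} if $i\le m$ (one of the selected cells) and \emph{bad} otherwise; let $U_g=\{\cell_i : i\le m,\ k_i\ge 1\}$, $U_b=\{\cell_i : i>m,\ k_i\ge 1\}$, let $F=\{\cell_1,\dots,\cell_m\}\setminus U_g$ be the selected cells left unused by $\OPT$, and let $N_b=\sum_{\cell_i\in U_b}k_i$. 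Two facts drive the argument: (i) subadditivity, $\OPT(\cell_i,k)\le k\cdot\OPT(\cell_i,1)$, obtained by splitting an optimal $k$-square solution into $k$ single squares and using $w_p\ge 0$; and (ii) $f(\cell_i,1)\le\OPT(\cell_i,1)\le\frac{1}{1-\varepsilon^{2}/9}\,f(\cell_i,1)$, together with $f(\cell_i,1)\ge f(\cell_m,1)$ for every good $\cell_i$ and $f(\cell_i,1)\le f(\cell_m,1)$ for every bad $\cell_i$ (by the choice of the top-$m$ cells).

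The placement I would use is: on each good cell $\cell_i\in U_g$ keep exactly $\OPT$'s $k_i$ squares (covering $\OPT(\cell_i,k_i)$), using $m-N_b$ squares in total, and then put one square, placed optimally, in each of $N_b$ distinct cells of $F$. Feasibility hinges on one counting step: every cell of $U_g$ carries at least one square, so $m-N_b=\sum_{\cell_i\in U_g}k_i\ge|U_g|$, hence $N_b\le m-|U_g|=|F|$, so there really are $N_b$ unused selected cells available. Each freshly used cell $\cell_i\in F$ contributes at least $\OPT(\cell_i,1)\ge f(\cell_i,1)\ge f(\cell_m,1)$, so this placement covers weight at least $\sum_{\cell_i\in U_g}\OPT(\cell_i,k_i)+N_b\cdot f(\cell_m,1)$.

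It remains to compare this with $\OPT=\sum_{\cell_i\in U_g}\OPT(\cell_i,k_i)+\sum_{\cell_i\in U_b}\OPT(\cell_i,k_i)$. The good part is reproduced verbatim, so I only need the contribution of the bad cells to be recovered up to the factor $(1-\frac{\varepsilon^{2}}{9})$. By (i) and (ii), $\sum_{\cell_i\in U_b}\OPT(\cell_i,k_i)\le\sum_{\cell_i\in U_b}k_i\cdot\frac{f(\cell_i,1)}{1-\varepsilon^{2}/9}\le\frac{N_b\cdot f(\cell_m,1)}{1-\varepsilon^{2}/9}$, i.e.\ $N_b\cdot f(\cell_m,1)\ge(1-\frac{\varepsilon^{2}}{9})\sum_{\cell_i\in U_b}\OPT(\cell_i,k_i)$, and adding the good part gives $\OPT'\ge(1-\frac{\varepsilon^{2}}{9})\OPT$. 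The main obstacle is exactly the bad cells: $\OPT$ may pack many squares into a single cell outside the top $m$, so a square-for-square relocation into distinct cells could fail for lack of room; the way around it is the pairing of subadditivity (which caps the value of each square in a bad cell by $f(\cell_m,1)/(1-\varepsilon^{2}/9)$) with the counting bound $N_b\le|F|$ (which always leaves enough unused top-$m$ cells, each worth at least $f(\cell_m,1)$, to absorb that total). The only routine matters left are verifying the subadditivity inequality and that $f(\cell_i,1)$ is simultaneously a lower bound on and a $(1-\frac{\varepsilon^{2}}{9})$-approximation of $\OPT(\cell_i,1)$.
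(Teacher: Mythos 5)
Your proposal is correct and follows essentially the same route as the paper's proof: relocate the squares that $\OPT$ places in non-selected cells, one each, into the unused selected cells, and bound the loss via subadditivity $\OPT(\cell_i,k)\le k\,\OPT(\cell_i,1)$ together with the chain $\OPT(\cell_i,1)\ge f(\cell_i,1)\ge f(\cell_j,1)\ge(1-\frac{\varepsilon^2}{9})\OPT(\cell_j,1)$. Your version just makes the counting step ($N_b\le|F|$) and the cell-by-cell decomposition of $\OPT$ more explicit than the paper does.
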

\begin{proof}
Let $k$ be the number of unit squares in $\OPT$ that are chosen from $\cell_{m+1},\ldots,\cell_{t}$. This means there must be at least $k$ cells in $\{\cell_{1},\ldots,\cell_{m}\}$ such that $\OPT$ does not place any unit square.
Therefore we can always move all $k$ unit squares placed in $\cell_{m+1},\ldots,\cell_{t}$ to these empty cells such that each empty cell contains only one unit square. Denote the weight of this modified solution by $A$. Obviously, $\OPT'\geq A$.
For any
$i$,$j$ such that $1\leq i\leq m<j\leq t$, we have $\OPT(\cell_i,1)\geq f(\cell_i,1)\geq f(\cell_j,1)\geq (1-\frac{\varepsilon^{2}}{9})\OPT(\cell_j,1)$. Combining with a simple observation that
$\OPT(\cell_i, k)\leq k\OPT(\cell_i, 1)$,
we can see that $A\geq(1-\frac{\varepsilon^{2}}{9})\OPT$. Therefore, $\OPT'\geq(1-\frac{\varepsilon^{2}}{9})\OPT$.
\qed
\end{proof}

Hence, from now on, we only need to consider the first $m$ cells
$\{\cell_{1},...,\cell_{m}\}$.
%Let $\OPT_m$ be the optimal solution when we only place squares in $c_{1},...,c_{m}$.
We distinguish two cases.
If $m\leq324(\frac{1}{\varepsilon})^{4}$,
we just apply the dynamic program to $\cell_{1},...,\cell_{m}$.
The running time of the above dynamic programming is $O((\frac{1}{\varepsilon})^{O(1)})$.

If $m>324(\frac{1}{\varepsilon})^{4}$,
we can use a greedy algorithm to find a answer of weight at least $(1-\frac{\varepsilon^{2}}{9})\OPT_{\F}(m)$.

Let $\upperd=(\frac{6}{\varepsilon})^2$.
For each cell $\cell_{i}$,
we find the upper convex hull of 2D points
$\{(0,\F(\cell_{i},0))$,$(1,\F(\cell_{i},1))$, \ldots , $(\upperd,\F(\cell_{i},\upperd))\}$.
See Figure~\ref{fig:fandtf}.
Suppose the convex hull points are
$\{(t_{i,0}, \F(\cell_{i},t_{i,0}))$, $(t_{i,1},\F(\cell_{i},t_{i,1}))$, ... , $(t_{i,s_{i}},\F(\cell_{i},t_{i,s_{i}}))\}$, where
$t_{i,0}=0$,$t_{i,s_{i}}=\upperd$.
For each cell, since the above points are already sorted from left to right, we can compute the convex hull in $O(\upperd)$ time by Graham's scan\cite{graham1972efficient}. Therefore, computing the convex hulls for all these cells takes $O(m \upperd)$ time.

For each cell $\cell_{i}$, we maintain a value $p_{i}$ representing that we are going to place $t_{i,p_{i}}$ squares in cell $\cell_{i}$.
Initially for all $i\in [m]$, $p_{i}=0$.
In each stage, we find the cell $\cell_{i}$ such that
current slope (the slope of the next convex hull edge)
$$
\frac{\F(\cell_{i},t_{i,p_{i}+1})-\F(\cell_{i},t_{i,p_{i}})}{t_{i,p_{i}+1}-t_{i,p_{i}}}$$
is maximized.
Then we add 1 to $p_{i}$, or equivalently
we assign $t_{i,p_{i}+1}-t_{i,p_{i}}$ more squares into cell $\cell_{i}$.
We repeat this step until we have already placed at least $m-\upperd$ squares.
We can always achieve this since we can place at most
$\upperd$ squares in one single cell in each iteration.
Let $m'$ the number of squares we have placed ($m=\upperd\leq m'\leq m$).
For the remaining $m-m'$ squares, we allocate them arbitrarily.
We denote the algorithm by $\GREEDYP$ and let the value obtained be $\greedy(m')$.
Having the convex hulls, the running time of the greedy algorithm is $O(m)$.

Now we analyze the performance of the greedy algorithm.

\begin{lemma}
\label{lm:greedyratio}
The above greedy algorithm computes an $(1-\varepsilon^2/9)$-approximation to
$\OPT_{\F}(m)$.
\end{lemma}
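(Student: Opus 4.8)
The plan is to compare the greedy output with the optimum of a \emph{fractional} budget‑allocation problem built from the concave upper‑hull functions, and then to exploit concavity in the budget parameter together with the standing assumption $m>324(\tfrac{1}{\varepsilon})^{4}$. Throughout we are in the case $m>324(\tfrac1\varepsilon)^4$, so $m_c=\upperd=(6/\varepsilon)^2$ and, crucially, $\upperd/m<\varepsilon^2/9$. For each cell $\cell_i$ let $\tF(\cell_i,\cdot):[0,\upperd]\to\R$ be the piecewise‑linear function whose graph is the upper convex hull of $\{(k,\F(\cell_i,k))\}_{k=0}^{\upperd}$ (the function depicted in Figure~\ref{fig:fandtf}). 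Then $\tF(\cell_i,\cdot)$ is concave, $\tF(\cell_i,k)\ge\F(\cell_i,k)$ for every integer $k$, with equality at each hull breakpoint $t_{i,j}$, and the successive edge slopes $\bigl(\tF(\cell_i,t_{i,j+1})-\tF(\cell_i,t_{i,j})\bigr)/(t_{i,j+1}-t_{i,j})$ are non‑increasing in $j$. I would introduce the fractional value
\[
G(B)=\max\Bigl\{\textstyle\sum_{i=1}^{m}\tF(\cell_i,x_i)\ :\ \textstyle\sum_{i=1}^{m}x_i=B,\ 0\le x_i\le\upperd\Bigr\},
\]
which, as a sup‑convolution of concave functions, is concave on its domain, with $G(0)=\sum_i\F(\cell_i,0)\ge0$.

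The first real step is to show that \GREEDYP\ computes $G(m')$ exactly. At each stage the algorithm advances along the hull edge of largest current slope; because within a cell the slopes are non‑increasing, the edges it picks form a prefix within each cell and, globally, a set of edges of total length $m'$ whose slopes are the largest available. A standard exchange (charging) argument then shows this is optimal: for any feasible fractional allocation $(x_i)$ with $\sum_i x_i=m'$, concavity lets us bound $\tF(\cell_i,x_i)-\F(\cell_i,0)$ by the sum of the $x_i$ largest marginal slopes of cell $\cell_i$, so $\sum_i\tF(\cell_i,x_i)$ is at most $\sum_i\F(\cell_i,0)$ plus the sum of the $m'$ largest marginal slopes over all cells — which is exactly what greedy attains. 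Here we use that $m'$, being a sum of complete edge lengths, is an exact edge boundary, so no partial edge ever needs to be split; hence $\greedy(m')=\sum_i\tF(\cell_i,t_{i,p_i})=G(m')$.

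It then remains to chain inequalities. Plugging the $\F$‑optimal integer allocation for budget $m$ into the fractional problem and using $\tF\ge\F$ gives $G(m)\ge\OPT_{\F}(m)$. Since $G$ is concave with $G(0)\ge0$ and $m-\upperd\le m'\le m$, concavity yields $G(m')\ge\tfrac{m'}{m}\,G(m)\ge\bigl(1-\tfrac{\upperd}{m}\bigr)G(m)\ge\bigl(1-\tfrac{\varepsilon^{2}}{9}\bigr)\OPT_{\F}(m)$, where the last step is the numeric bound $\upperd/m<\varepsilon^2/9$. Finally, distributing the leftover $m-m'$ squares arbitrarily can only increase the covered weight, so the value returned by the algorithm is at least $\greedy(m')=G(m')\ge(1-\tfrac{\varepsilon^{2}}{9})\OPT_{\F}(m)$, as claimed.

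The main obstacle is the exact‑optimality claim of the second paragraph: making the charging argument precise so that chunked maximum‑slope greedy provably equals the fractional optimum $G(m')$, and verifying carefully that stopping as soon as at least $m-\upperd$ squares have been placed always lands on an edge boundary, so that a partial edge never spoils the comparison. Everything else — the concavity of $G$, the monotonicity $\tF\ge\F$, and the arithmetic $\upperd/m<\varepsilon^2/9$ coming from $m>324(\tfrac1\varepsilon)^4$ — is routine bookkeeping.
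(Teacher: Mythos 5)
Your proposal is correct and follows essentially the same route as the paper: both define the upper-convex-hull functions $\tF$, argue that the edge-by-edge maximum-slope greedy is exactly optimal for the relaxed (concave) allocation problem at budget $m'$, and then use concavity in the budget together with $\upperd/m<\varepsilon^2/9$ (from $m>324(1/\varepsilon)^4$) to lose only a $(1-\varepsilon^2/9)$ factor. The only cosmetic difference is that you phrase the relaxation fractionally via $G(B)$ while the paper works with integer allocations of $\tF$; since all hull breakpoints are integers these coincide.
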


\begin{figure}[t]
	\centering
	\includegraphics[width=0.7\textwidth]{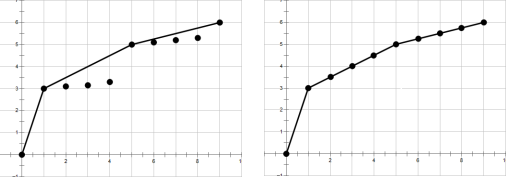}
	\caption{$\F(\cell_{i},k)$ (left) and $\tF(\cell_{i},k)$ (right)}
	\label{fig:fandtf}
\end{figure}

\begin{proof}
Define an auxiliary function $\tF(\cell_{i},k)$ as follows:
If $k=t_{i,j}$ for some $j$, $\tF(\cell_{i},k)=F(\cell_{i},k)$.
Otherwise, suppose $t_{i,j}<k<t_{i,j+1}$,
then $$
\tF(\cell_{i},k)=
F(\cell_{i},t_{i,j})+
\frac{F(\cell_{i},t_{i,j+1})-F(\cell_{i},t_{i,j})}{t_{i,j+1}-t_{i,j}}\times(k-t_{i,j}).
$$
Intuitively speaking, $\tF(\cell_{i},k)$(See Figure~\ref{fig:fandtf}) is the function defined by the
upper convex hull at integer points.
\footnote{
	At first sight, it may appear that $\F(\cell_i,k)$ should be a concave function.
	However, this is not true. See Figure~\ref{fig:convex} for an example.}
Thus, for all $i\in[m]$, $\tF(\cell_{i},k)$ is a concave function.
Obviously, $\tF(\cell_i,k)\geq \F(\cell_i,k)$ for all $i\in [m]$ and all $k\in [\upperd]$.

Let $\OPT_{\tF}(i)$  be the optimal solution we can get from the values $\tF(\cell_{i},k)$ by placing $i$ squares.
By the convexity of $\tF(\cell_{i},k)$,
the following greedy algorithm is optimal:
as long as we still have budget, we assign 1 more square to the cell
which provides the largest increment of the objective value.
In fact, this greedy algorithm runs in almost the same way
as \GREEDYP. The only difference is that  \GREEDYP\ only picks
an entire edge of the convex hull, while the greedy algorithm here
may stop in the middle of an edge (only happen for the last edge).
Since the marginal increment never increases, we
can see that $\OPT_{\tF}(i)$ is concave.

By the way of choosing cells in our greedy algorithm,
we make the following  simple but important observation:
$$
\greedy(m')=\OPT_{\tF}(m')=\OPT_{\F}(m').
$$
So, our greedy algorithm is in fact optimal for $m'$.
%Since $\tF(\cell_{i},k)$ is always no smaller than $F(\cell_{i},k)$,
%we have $\OPT_{\tF}(m')\geq \OPT_{\F}(m')$.
Combining with $m-m'\leq\upperd$ and the concavity of $\OPT_{\tF}$,
we can see that
$$
\OPT_{\tF}(m')\geq\frac{m-\upperd}{m} \OPT_{\tF}(m)\geq \left(1-\frac{\varepsilon^{2}}{9}\right)\OPT_{\F}(m).
$$
The last inequality holds because
$\OPT_{\tF}(i)\geq \OPT_{\F}(i)$ for any $i$.
\qed
\end{proof}

\begin{figure}[t]
	\centering
\includegraphics[width=0.8\textwidth]{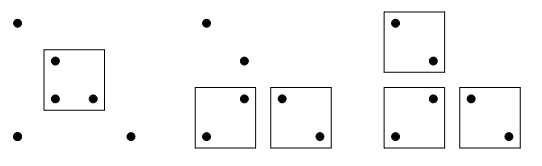}
	\caption{$\F(\cell_{i},k)$ may not be concave: $\F(\cell_{i},1)=3$, $\F(\cell_{i},2)=4$, $\F(\cell_{i},3)=6$.}
	\label{fig:convex}
\end{figure}

\subsection{Computing $\F(\cell,k)$ }
\label{subsec:F}
Now we show the subroutine $\GeneralmPartion$ for computing $\F(\cell,k)$.
We use a similar partition algorithm as Section~\ref{subsec:partition}.
The only difference is that this time we need to partition the cell finer
so that the maximum possible weight of points between any two adjacent parallel partition lines is $(\frac{\varepsilon^{3}W_{\cell}}{864})$.
After partitioning the cell, we enumerate all the possible ways of placing $k$
unit squares at the grid point.
Similarly, for each unit square $r$, we only count the weight of points that are in some cell fully covered by $r$. The algorithm is summarized in Algorithm~\ref{algo:mainalgo2}.

We can adapt the algorithm in~\cite{de2009covering} to enumerate these possible choices in
 $O((\frac{1}{\varepsilon})^{\Delta})$
time
where $\Delta=O(\min(\sqrt{m},\frac{1}{\varepsilon}))$.  We briefly sketch the algorithm in Appendix~\ref{appb}.
Now we prove the correctness of this algorithm.
\begin{lemma}
$\GeneralmPartion$ returns a $(1-\frac{\varepsilon}{3})$
approximate answer for $\OPT(\cell_i,k)$.
\end{lemma}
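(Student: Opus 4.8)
The plan is to reuse the template of the proof of Theorem~\ref{thm:main1}, with one essential new twist: since we now place $k$ squares, the rounding error accumulates over all $k$ of them, and the work is to show it is still an $O(\varepsilon)$ fraction of $\OPT(\cell_i,k)$. Fix a cell $\cell=\cell_i$ (of side length $\tfrac{6}{\varepsilon}$) and let $r_1,\dots,r_k$ be an optimal placement inside $\cell$ witnessing $\OPT(\cell,k)$. First I would record a lower bound on $\OPT(\cell,k)$: tiling $\cell$ by an array of $(\tfrac{6}{\varepsilon})^2$ unit squares of total weight $W_{\cell}$ and keeping the $k$ heaviest of them is a legal placement, so
\[
\OPT(\cell,k)\ \ge\ \frac{k}{(6/\varepsilon)^2}\,W_{\cell}\ =\ \frac{k\varepsilon^{2}}{36}\,W_{\cell}.
\]

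Second, I would snap each $r_\ell$ onto the partition grid $\calL$. Let $\widehat r_\ell$ be the largest axis-parallel rectangle whose four sides lie on lines of $\calL$ and that is contained in $r_\ell$; since $r_\ell$ is a unit square, $\widehat r_\ell$ has width and height at most $1$, so the unit square $r_\ell'$ located at the top-left corner of $\widehat r_\ell$ — a grid point of $\calL$, hence one of the placements the algorithm enumerates — satisfies $\widehat r_\ell\subseteq r_\ell'$. Therefore every small cell composing $\widehat r_\ell$ is fully covered by $r_\ell'$, so the approximate weight $\GeneralmPartion$ charges to $r_\ell'$ counts all of $\calP\cap\widehat r_\ell$. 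The discarded part $r_\ell\setminus\widehat r_\ell$ lies in the two vertical strips of $\calL$ flanking the left and right edges of $r_\ell$ together with the two horizontal strips flanking its top and bottom edges; by the (finer) partition rule each such strip has $\calP$-weight at most $w_{d}=\tfrac{\varepsilon^{3}W_{\cell}}{864}$, so $w(\calP\cap(r_\ell\setminus\widehat r_\ell))\le 4w_{d}$. (If some $r_\ell$ contains no line of $\calL$ in one direction, then $r_\ell$ lies inside a single strip, the same bound holds, and that square may simply be dropped.)

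Third, I would assemble the estimate. The placement $\{r_1',\dots,r_k'\}$ is among those enumerated by $\GeneralmPartion$, and the value it records for this placement is at least $w\!\bigl(\calP\cap\bigcup_{\ell}\widehat r_\ell\bigr)$. Using $\bigcup_\ell\widehat r_\ell\supseteq\bigl(\bigcup_\ell r_\ell\bigr)\setminus\bigcup_\ell(r_\ell\setminus\widehat r_\ell)$ together with the per-square bound,
\[
\F(\cell,k)\ \ge\ w\!\Bigl(\calP\cap\textstyle\bigcup_{\ell} r_\ell\Bigr)-\sum_{\ell=1}^{k}w\bigl(\calP\cap(r_\ell\setminus\widehat r_\ell)\bigr)\ \ge\ \OPT(\cell,k)-4k\,w_{d}.
\]
Since $k\le m_{c}\le(\tfrac{6}{\varepsilon})^{2}$, the first-step bound gives $k\varepsilon^{3}W_{\cell}\le 36\varepsilon\,\OPT(\cell,k)$, whence $4kw_{d}=\tfrac{k\varepsilon^{3}W_{\cell}}{216}\le\tfrac{\varepsilon}{6}\OPT(\cell,k)\le\tfrac{\varepsilon}{3}\OPT(\cell,k)$, and therefore $\F(\cell,k)\ge(1-\tfrac{\varepsilon}{3})\OPT(\cell,k)$, as claimed.

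I expect the main obstacle to be exactly the bookkeeping in the last step: the total rounding loss is $\sum_\ell 4w_{d}$, which scales with $k$, so the argument only closes because $\OPT(\cell,k)$ itself grows at least linearly in $k$ (the first step) and because the constant $864$ in the partition was chosen precisely so that $4kw_{d}$ stays below $\tfrac{\varepsilon}{3}\OPT(\cell,k)$ for every $k\le(\tfrac{6}{\varepsilon})^{2}$; keeping these constants consistent with $w_{d}=\varepsilon^{3}W_{\cell}/864$ is the delicate point. A lesser concern is checking that the snapped region $\bigcup_\ell\widehat r_\ell$ is genuinely a union of small cells, each lying fully inside its assigned square $r_\ell'$, and handling the degenerate squares noted above.
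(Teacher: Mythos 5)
Your proof is correct and follows essentially the same route as the paper's: lower-bound $\OPT(\cell_i,k)$ by tiling the $\tfrac{6}{\varepsilon}\times\tfrac{6}{\varepsilon}$ cell with $(\tfrac{6}{\varepsilon})^2$ unit squares and keeping the $k$ heaviest, then bound the snapping loss by $4k$ strips of weight at most $w_d=\varepsilon^3W_{\cell}/864$ each, exactly as the paper does by invoking "the same argument as in Theorem~\ref{thm:main1}." Your constants are in fact slightly tighter (you use $\OPT\ge k\varepsilon^2W_{\cell}/36$ where the paper settles for $k\varepsilon^2W_{\cell}/72$, yielding a loss of $\tfrac{\varepsilon}{6}\OPT$ rather than $\tfrac{\varepsilon}{3}\OPT$), and your explicit inward snapping via $\widehat r_\ell$ makes the per-square accounting more transparent than the paper's sketch.
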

\begin{proof}
We can use $(\frac{6}{\varepsilon})^{2}$ unit squares to cover the entire cell, so $\OPT(\cell_i,k)\geq \frac{k\varepsilon^{2}W_{\cell}}{72}$.
By the same argument as in Theorem~\ref{thm:main1},
the difference between $\OPT(\cell_i,k)$
and the answer from Algorithm~\ref{algo:mainalgo2} are at most $4k$ times the maximum possible weight of points between two adjacent parallel partition lines.
Therefore, the algorithm returns a $(1-\frac{\varepsilon}{3})$-approximate answer of $\OPT(\cell_i,k)$.
\qed
\end{proof}
Now we can conclude the following theorem.
\begin{theorem}
\label{thm:main2}
Algorithm~\ref{algo:mainalgo2} returns a $(1-\varepsilon)$-approximate answer
	for $\MaxRS(\calP, m)$
	in time
	$$
	O\left( \frac{n}{\varepsilon}\log \frac{1}{\varepsilon}+m\left(\frac{1}{\varepsilon}\right)^{\Delta}\right),
	$$
	where $\Delta=O(\min(\sqrt{m},\frac{1}{\varepsilon}))$.
\end{theorem}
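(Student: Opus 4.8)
The plan is to assemble Theorem~\ref{thm:main2} from the pieces already established, tracking both the approximation factor and the running time across the three layers of the algorithm (grid shifting, per-cell computation of $\F(\cell,k)$, and the combination step). First I would fix the grid $G^\star\in\mathbb{G}$ guaranteed by Lemma~\ref{shifting}, so that there is a $(1-\tfrac{2\varepsilon}{3})$-approximate solution $R$ none of whose unit squares crosses $G^\star$. Since $R$ respects the cell boundaries of $G^\star$, its restriction to each cell $\cell_i$ places some number $k_i$ of squares with $\sum_i k_i = m$, and its total weight is $\sum_i \OPT(\cell_i,k_i)$. This is the quantity the rest of the algorithm must approximate well on at least one grid of $\mathbb{G}$; running the subroutine on every non-empty grid and taking the best answer then loses nothing beyond this fixed grid's guarantee.

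Next I would chain the approximation losses multiplicatively. On grid $G^\star$: (i) Lemma~\ref{shifting} gives factor $(1-\tfrac{2\varepsilon}{3})$ against $\OPT$; (ii) the subroutine $\GeneralmPartion$ computes $\F(\cell_i,k)$, a $(1-\tfrac{\varepsilon}{3})$-approximation of $\OPT(\cell_i,k)$, so $\OPT_\F(m)$ as defined in~\eqref{eq:opt} is a $(1-\tfrac{\varepsilon}{3})$-approximation to the weight of $R$ restricted to the cells; (iii) the combination step returns at least $(1-\tfrac{\varepsilon^2}{9})\OPT_\F(m)$ — via exact dynamic programming when $m\le 324(\tfrac1\varepsilon)^4$ (loss factor $1$ there), and via Lemma~\ref{selectm} followed by Lemma~\ref{lm:greedyratio} when $m>324(\tfrac1\varepsilon)^4$, where Lemma~\ref{selectm} costs $(1-\tfrac{\varepsilon^2}{9})$ and Lemma~\ref{lm:greedyratio} costs another $(1-\tfrac{\varepsilon^2}{9})$. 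Multiplying the relevant factors, $(1-\tfrac{2\varepsilon}{3})(1-\tfrac{\varepsilon}{3})(1-\tfrac{\varepsilon^2}{9})^2 \ge 1-\varepsilon$ for all $\varepsilon\in(0,1)$; I would verify this elementary inequality by expanding and bounding the higher-order terms (all the negative cross terms only help), which confirms the claimed $(1-\varepsilon)$ bound. One should double-check that the greedy branch is indeed invoked on the $m$ cells surviving Lemma~\ref{selectm} and that its guarantee is stated against $\OPT_\F(m)$ for those cells, so that Lemma~\ref{selectm} and Lemma~\ref{lm:greedyratio} compose cleanly.

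For the running time I would add up the three contributions. Building the grids and hashing non-empty cells is $O(n)$ per shift and there are $O(\tfrac1\varepsilon)$ shifts, and — crucially, exactly as in the $m=1$ analysis of Theorem~\ref{thm:main1} — summing the per-cell cost $O(n_\cell\log\tfrac1\varepsilon + (\tfrac1\varepsilon)^\Delta)$ of $\GeneralmPartion$ over all cells of one grid telescopes to $O(n\log\tfrac1\varepsilon + (\text{number of non-empty cells})\cdot(\tfrac1\varepsilon)^\Delta)$; since the number of non-empty cells is at most $\min(n,\,\text{something})$ but more usefully is bounded so that the enumeration term over all cells and all shifts is $O(m(\tfrac1\varepsilon)^\Delta)$ after noting only cells that can receive a square matter and there are $O(m)$ of those post-selection. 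Over the $O(\tfrac1\varepsilon)$ shifts the point-processing term becomes $O(\tfrac n\varepsilon\log\tfrac1\varepsilon)$. The combination step is $O(n\log\tfrac1\varepsilon)$ for the $f(\cell_i,1)$ values plus $O(m)$ selection plus either $O((\tfrac1\varepsilon)^{O(1)})$ dynamic programming or $O(m)$ greedy (with $O(m\upperd)=O(m(\tfrac1\varepsilon)^2)$ convex-hull construction), all dominated by the stated bound. Collecting terms yields $O\!\left(\tfrac n\varepsilon\log\tfrac1\varepsilon + m(\tfrac1\varepsilon)^\Delta\right)$.

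The main obstacle is the bookkeeping in the running-time sum: one must argue carefully that the expensive enumeration factor $(\tfrac1\varepsilon)^\Delta$ is charged only $O(m)$ times across all $O(\tfrac1\varepsilon)$ shifted grids — not once per non-empty cell, which could be as many as $\Theta(n)$. The resolution is that after the selection step of Section~\ref{subsec:greedy} (or simply because a solution places at most $m$ squares) we only ever need $\F(\cell_i,k)$ for $k\ge 1$ in $O(m)$ cells per grid, and for cells that will receive zero squares the value $\F(\cell_i,0)=0$ is free; handling this cleanly — in particular making sure the greedy/DP combination only queries $\F$ on the retained cells, and that computing $\F(\cell_i,k)$ for all $k\in[m_\cell]$ in one cell is still only $O(n_\cell\log\tfrac1\varepsilon+(\tfrac1\varepsilon)^\Delta)$ rather than paying the enumeration cost once per value of $k$ — is the delicate point. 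Everything else is direct composition of the already-proved lemmas.
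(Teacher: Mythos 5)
Your proposal is correct and follows essentially the same route as the paper: chain the losses $(1-\tfrac{2\varepsilon}{3})(1-\tfrac{\varepsilon}{3})(1-\tfrac{\varepsilon^{2}}{9})^{2}\ge 1-\varepsilon$ from Lemma~\ref{shifting}, the $\F(\cell_i,k)$ subroutine, Lemma~\ref{selectm} and Lemma~\ref{lm:greedyratio}, and charge the $(\tfrac{1}{\varepsilon})^{\Delta}$ enumeration only to the $m$ cells surviving selection while the per-point work telescopes to $O(\tfrac{n}{\varepsilon}\log\tfrac{1}{\varepsilon})$ over the $O(\tfrac{1}{\varepsilon})$ shifts. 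The "delicate point" you flag is resolved exactly as you suspect (and as the paper states): each selected cell is partitioned once and all $\F(\cell_i,k)$ are computed from that single partition, with the extra factor of $\upperd$ values of $k$ absorbed into the $O(\cdot)$ in the exponent $\Delta$.
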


\begin{proof}

We know that $\OPT_{\F}(m)$ of $m$ selected cells is a $(1-\frac{\varepsilon}{3})$-approximation to $\OPT'$, so the greedy algorithm returns a $(1-\frac{\varepsilon}{3})(1-\frac{\varepsilon^{2}}{9})$-approximation to $\OPT'$. Combining with Lemma~\ref{shifting} and Lemma~\ref{selectm}, Algorithm~\ref{algo:mainalgo2} returns a $(1-\frac{2\varepsilon}{3})(1-\frac{\varepsilon}{3})(1-\frac{\varepsilon^{2}}{9})(1-\frac{\varepsilon^{2}}{9})$ approximate solution of the original problem.
Since $(1-\frac{2\varepsilon}{3})(1-\frac{\varepsilon}{3})(1-\frac{\varepsilon^{2}}{9})(1-\frac{\varepsilon^{2}}{9})>(1-\varepsilon)$, Algorithm~\ref{algo:mainalgo2} does return a $(1-\varepsilon)$-approximate solution.

\begin{algorithm}[h]
	\caption{$\MaxRS(\calP, m)$}
	\begin{algorithmic}[]
		\State $w_{\max}\leftarrow0$
		\For {each $G\in\{G_{\frac{6}{\varepsilon}}(0,0),...,G_{\frac{6}{\varepsilon}}(\frac{6}{\varepsilon}-1,\frac{6}{\varepsilon}-1)\}$}
		\State Use perfect hashing to find all the non-empty cells of $G$.
		\For {each non-empty cell $\cell$ of $G$}
		\State $r_{\cell}\leftarrow$ Algorithm~\ref{algo:mainalgo1} for $\cell$ with approximate ratio $(1-\frac{\varepsilon^{2}}{9})$
		\EndFor;
		\State Find the $m$ cells with the largest $r_{\cell}$.
		Suppose they are $\cell_{1},...,\cell_{m}$.
		\For{$i\leftarrow 1$ \textbf{to} $m$}
		\For{$k\leftarrow 1$ \textbf{to} $\upperd$}
		$\F(\cell_{i},k)\leftarrow$ \GeneralmPartion(c,k)
		\EndFor
		\EndFor;
		\State {\bf if} $m\leq324(\frac{1}{\varepsilon})^{4}$, {\bf then} $r\leftarrow \DP(\{\F(\cell_{i},k)\})$
		{\bf else}  $r\leftarrow \GREEDYP(\{\F(\cell_{i},k)\})$
		\State {\bf if} $w(r)> w_{\max}$, {\bf then} $w_{\max}\leftarrow w(r)$ and $r_{\max}\leftarrow r$
		\EndFor;
		\State \textbf{return} $r_{\max}$;
	\end{algorithmic}
    \label{algo:mainalgo2}
\end{algorithm}

We only need to calculate the overall running time. Solving the values $\F(\cell_i,1)$ and finding out the top $m$ results require $O(n\log \frac{1}{\varepsilon})$ time.
We compute the values $\F(\cell_i,k)$ of $m$ cells. For each cell $\cell_i$, we partition it only once and calculate $\F(\cell_i,1),\ldots,\F(\cell_i,\Delta)$ using the same partition.
Computing the values $\F(\cell_i,k)$ of all $m$ cells requires
$O(n\log (\frac{1}{\varepsilon})+m(\frac{1}{\varepsilon})^{\Delta})$
time.
We do the same for $\frac{1}{\varepsilon}$ different grids.
Therefore, the over all running time is as we state in the theorem.
\qed
\end{proof}

%\input{futurework}
%----------------------------------------------------------------------------------------
%	BIBLIOGRAPHY
%----------------------------------------------------------------------------------------
\vspace{-0.3cm}

\bibliographystyle{abbrv}
\bibliography{ref}

\appendix
\section{Enumeration In \GeneralmPartion}
\label{appb}
We can adapt the algorithm in~\cite{de2009covering} to enumerate these possible ways of placing $k$
unit squares at the grid point in
 $O((\frac{1}{\varepsilon})^{\Delta})$
time
where $\Delta=O(\min(\sqrt{m},\frac{1}{\varepsilon}))$.
We briefly sketch the algorithm.
We denote the optimal solution as $\OPT_\cell$.
From \cite{agarwal2002exact} we know that for any optimal solution,
there exists a line of integer grid (either horizontal or vertical) that
intersects with  $O(\sqrt{m})$ squares in $\OPT_\cell$, denoted as
the {\em parting line}.
So we can use dynamic programming.
At each stage, we enumerate the parting line, and the $O(\sqrt{m})$ squares
intersecting the parting line.
We also enumerate the number of squares in each side of the parting line in the optimal solution.
The total number of choices is $O((\frac{1}{\varepsilon})^{\Delta})$.
Then, we can solve recursively for each side. In the recursion,
we should consider that a subproblem which is composed of a smaller rectangle, and an enumeration of  $O(\sqrt{m})$ squares of the optimal solution intersecting the boundary of the rectangle and
at most $m$ squares fully contained in the rectangle.
Overall, the dynamic programming
can be carried out in $O((\frac{1}{\varepsilon})^{\Delta})$
time.

\section{Extension to Other Shapes}
\label{sec:othershape}
Our algorithm can be easily extended to other shapes, such as unit disks.
For ease of description, we only consider unit disks, i.e., the $\MaxCRS(\calP, m)$ problem.
The algorithm framework is almost the same as before,
except several implementation details required changing from rectangles to disks.
The major difference is the way for building an $\varepsilon$-approximation in each cell
(the partition scheme in Section~\ref{subsec:F} works only for rectangles).

%Fortunately, de Berg et al~\cite{de2009covering}
%have developed several subroutines that we can use directly.

Now, we sketch our algorithm, focusing only on the differences.
Again, we first adopt the shifting technique .
For unit disk, we consider grids with a different side length:
$G_{56/\varepsilon}(a,b)$.
Again for simplicity, we assume that $\frac{1}{\varepsilon}$ is an integer
and no point in $\calP$ has an integer coordinate.
We shift the grid to $\frac{28}{\varepsilon}$ different positions: $(0,0),(2,2),....,(\frac{56}{\varepsilon}-2,\frac{56}{\varepsilon}-2)$.
Let
$
\mathbb{G}=\left\{G_{56/\varepsilon}(0,0),...,G_{56/\varepsilon}(56/\varepsilon-2,56/\varepsilon-2)\right\}.
$
We can prove the following lemma, which is similar to Lemma~\ref{shifting}. The only difference in the proof
is that for unit disks, there exists an optimal answer $\OPT$
such that each covered point is covered by at most 7 disks in $\OPT$ instead of 4 for unit squares. 
\begin{lemma}
There exist $G^\star\in\mathbb{G}$ and a $(1-\frac{\varepsilon}{2})$-approximate answer $R$
such that all the disks in $R$
do not intersect any line in $G^\star$.
\end{lemma}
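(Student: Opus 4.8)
The plan is to mimic the proof of Lemma~\ref{shifting} almost verbatim, replacing the combinatorial covering bound for unit squares by the corresponding bound for unit disks. The key observation to establish first is geometric: for any point $p$, a bounded number of unit disks suffices to cover the $2\times 2$ axis-parallel square centered at $p$. A careful packing shows that $7$ unit disks suffice (this is where the constant $7$ in the statement comes from, as opposed to $4$ for squares); I would state this as a small claim and justify it by an explicit placement of seven disks of radius $1$ covering a $2\times 2$ square. Consequently, there is an optimal solution $\OPT$ in which every covered point lies in at most $7$ of the chosen disks — otherwise we could delete a redundant disk from the cover of some over-covered point's neighborhood without decreasing the covered weight, a standard exchange argument.

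Next I would set up the shifting sum exactly as in Lemma~\ref{shifting}. For each grid $G_{56/\varepsilon}(2i,2i)\in\mathbb{G}$, build a modified solution $R_i$ from $\OPT$ by relocating every disk $r$ that meets a line of $G_{56/\varepsilon}(2i,2i)$: if $r$ crosses only one line (horizontal or vertical), push $r$ fully to the heavier side, losing at most $\tfrac12 w(r)$; if $r$ crosses one horizontal and one vertical line, push $r$ into the heaviest of the four quadrants, losing at most $\tfrac34 w(r)$. Since the disks have diameter $2$ and the grid spacing is $56/\varepsilon$, a disk of the first type can intersect the line sets of at most two grids in $\mathbb{G}$ (here I use that the shift step is $2$ and the $28/\varepsilon$ shifted copies partition the relevant offsets), and a disk of the second type at most one. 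Hence $\sum_i (\text{weight lost in } R_i) \le \max\{\tfrac12\cdot 2,\ \tfrac34\cdot 1\}\cdot w(\OPT\text{-disks counted with multiplicity}) \le 1\cdot 7\,w(\OPT)$, using that each disk is counted with multiplicity at most $7$ from the over-coverage bound. Therefore $\sum_i w(R_i)\ge \bigl(\tfrac{28}{\varepsilon}-7\bigr)w(\OPT)$.

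Averaging over the $28/\varepsilon$ grids, some $R_{i^\star}$ satisfies $w(R_{i^\star})\ge \bigl(1-\tfrac{7\varepsilon}{28}\bigr)w(\OPT)=\bigl(1-\tfrac{\varepsilon}{4}\bigr)w(\OPT)\ge\bigl(1-\tfrac{\varepsilon}{2}\bigr)w(\OPT)$, and by construction none of the disks of $R_{i^\star}$ intersects any line of $G^\star:=G_{56/\varepsilon}(2i^\star,2i^\star)$. Taking $R:=R_{i^\star}$ proves the lemma. I expect the only real obstacle to be pinning down the two constants precisely — namely verifying that $7$ unit disks genuinely cover a $2\times2$ square, and checking that with shift step $2$ and $28/\varepsilon$ copies the multiplicity counts ($\le 2$ for one-line crossings, $\le 1$ for two-line crossings) hold simultaneously with the arithmetic $\tfrac{28}{\varepsilon}-7 \ge \tfrac{28}{\varepsilon}(1-\tfrac{\varepsilon}{2})$ — everything else is a transcription of the earlier argument.
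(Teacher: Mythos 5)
Your overall plan is exactly the paper's: the paper proves this lemma by pointing back to the proof of Lemma~\ref{shifting} and changing the per-point multiplicity bound from $4$ to $7$, and your transcription of the shifting/charging argument (step size $2$ so that a diameter-$2$ disk crosses at most one vertical and one horizontal line over all grids, per-disk loss at most $\max\{\tfrac12\cdot 2,\tfrac34\cdot 1\}=1$, total loss at most $7\,w(\OPT)$, averaging over the $\tfrac{28}{\varepsilon}$ grids to get $1-\tfrac{\varepsilon}{4}\ge 1-\tfrac{\varepsilon}{2}$) is faithful and arithmetically correct.

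There is, however, one genuine slip, and it sits precisely in the only place where the disk case differs from the square case: your justification of the constant $7$. You propose to show that $7$ unit disks cover the $2\times 2$ square centered at $p$, and then run the exchange argument. But the exchange argument requires covering the region that contains \emph{every unit disk containing $p$}; for squares that region is indeed the $2\times 2$ square centered at $p$, but for unit disks of radius $\rho$ it is the disk of radius $2\rho$ centered at $p$ (a disk containing $p$ has its center within distance $\rho$ of $p$ and hence reaches up to distance $2\rho$ from $p$). If ``unit disk'' means radius $1$, the $2\times2$ square centered at $p$ does not contain all unit disks through $p$, so covering that square (which only takes $4$ unit disks anyway) proves nothing about the multiplicity; if ``unit disk'' means diameter $1$, then $7$ such disks cannot cover a $2\times2$ square at all. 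The fact you actually need is the classical one that a disk of radius $2\rho$ can be covered by $7$ disks of radius $\rho$ (one concentric disk plus six in a hexagonal ring); with that substitution the exchange argument gives the multiplicity-$7$ bound the paper asserts, and the rest of your proof goes through unchanged.
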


Now consider a fixed grid $G\in \mathbb{G}$.
For each cell $\cell$, we compute a $(1-\frac{\varepsilon^{2}}{8})$-approximation to $\OPT(\cell,1)$,
in $O(n\varepsilon^{-6})$ time,
by applying the following result in \cite{de2009covering}.
\begin{lemma} {\em \cite{de2009covering}}
\label{MaxCRSP1}
Give a set $\calP$ of $n$ weighted points and $\varepsilon(0<\varepsilon<1)$, we can find a $(1-\varepsilon)$-approximation for $\MaxCRS(\calP, 1)$  in $O(n\varepsilon^{-3})$ time.
\end{lemma}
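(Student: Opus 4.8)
The plan is to reconstruct the single-disk scheme of de Berg et al.~\cite{de2009covering}, whose structure parallels our rectangle algorithm but replaces the straight-line partition by a discretization that respects a circular boundary. First I would localize the optimum exactly as in the proof of Theorem~\ref{thm:main1}: overlay a grid of constant mesh (side $2$) so that, after trying the $O(1)$ shifts, the optimal unit disk lies inside a single cell $\cell$; it then suffices to compute a $(1-\varepsilon)$-approximation of the best disk confined to each nonempty cell and return the maximum, since the cell containing the true optimum yields a global $(1-\varepsilon)$-approximation. The payoff of localization is the anchor inequality $\OPT(\cell,1)\ge W_{\cell}/C_0$ for an absolute constant $C_0$ (a constant number of unit disks covers a $2\times 2$ cell), which lets me set all error thresholds relative to the known quantity $W_{\cell}$ rather than the unknown global optimum.

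Next I would discretize inside $\cell$. Lay an axis-parallel fine grid of spacing $\delta=\Theta(\varepsilon)$, giving $O(1/\varepsilon^2)$ fine cells, and restrict candidate disk centers to the fine-grid vertices inside $\cell$. For each candidate center I define the approximate coverage to be the total weight of the fine cells that lie entirely inside the disk. After distributing the $n_{\cell}$ points into fine cells and building a two-dimensional prefix-sum table in $O(n_{\cell}+1/\varepsilon^2)$ time, each candidate can be evaluated by summing, over the $O(1/\varepsilon)$ horizontal rows the disk spans, the contiguous interval of fully covered columns; this costs $O(1/\varepsilon)$ per candidate. With $O(1/\varepsilon^2)$ candidates this is $O(1/\varepsilon^3)$ per cell, so over all nonempty cells the total is $O(n+n\varepsilon^{-3})=O(n\varepsilon^{-3})$, matching the claim.

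The approximation guarantee is where the real work lies. For the best candidate, the gap between true and approximate coverage is exactly the weight lying in the fine cells that the disk boundary crosses. Because the boundary is a unit circle of length $O(1)$, it crosses only $O(1/\delta)=O(1/\varepsilon)$ fine cells, so if every such boundary cell carried weight $O(\varepsilon^2 W_{\cell})$ the total error would be $O(\varepsilon W_{\cell})=O(\varepsilon\,\OPT(\cell,1))$ by the anchor inequality, as desired.

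I expect this last weight bound to be the main obstacle, and it is precisely the point at which the rectangle machinery breaks down: a uniform fine grid gives a short crossed boundary but no control on the weight of an individual cell, whereas the weighted-median partition of Lemma~\ref{lm:partition} controls the weight between adjacent parallel lines but produces cells whose straight edges cannot certify that a curved disk fully contains them. Reconciling these two requirements---a geometrically regular grid for the circular boundary together with a per-cell weight of $O(\varepsilon^2 W_{\cell})$---is the crux; the route I would take is to average over $O(1/\varepsilon)$ shifts of the fine grid so that, for a good shift, each point lies in a boundary-crossed cell for only a bounded number of shifts and the boundary weight is driven below $\varepsilon\,\OPT(\cell,1)$. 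Carrying this rounding argument out cleanly is exactly what is done in de Berg et al.~\cite{de2009covering}, which is why we import the statement rather than reprove it here.
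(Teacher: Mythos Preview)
The paper does not prove this lemma; it is stated as a black-box result of de~Berg, Cabello and Har-Peled~\cite{de2009covering} and no argument is given. Your proposal reaches the same endpoint in its final sentence---``we import the statement rather than reprove it here''---so in that narrow sense you match the paper exactly.

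Everything preceding that sentence is additional content the paper does not contain. As a reconstruction it is plausible in outline (localize to a bounded cell, discretize centers on an $\Theta(\varepsilon)$ grid, charge the error to boundary-crossed fine cells), and you correctly flag the real difficulty: a uniform fine grid gives no per-cell weight bound, so the boundary charge need not be $O(\varepsilon)\cdot\OPT$. Your proposed fix via averaging over $O(1/\varepsilon)$ shifts is one viable route. Note, though, that the machinery actually used in~\cite{de2009covering}, and echoed in the very next lemma of this paper (Lemma~\ref{epsilonapprox2}), is based on $\varepsilon$-approximations for the relevant range space rather than on the weighted-median partitioning of Section~\ref{subsec:partition}; your sketch leans on the rectangle toolkit of this paper more than on the disk toolkit of the cited one. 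Since the paper itself only cites the result, none of this affects correctness here.
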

Then, we use the selection algorithm to find out the $m$ cells with the largest returned values.
The rest of our algorithm will only consider those $m$ cells.
%We use the same notations ($\F(\cell,k)$,$W_{\cell}$,$n_{\cell}$ etc.) as in Section~\ref{sec:generalm}.
For each such cell $\cell$, we want to compute $(1-\frac{\varepsilon}{2})$-approximations $\F(\cell,k)$ of $\OPT(\cell,k)$.
For this purpose, we can use the following lemma from \cite{de2009covering}.

\begin{lemma}
\label{epsilonapprox2}{\em \cite{de2009covering}}
Give a set $\calP$ of n weighted points, an integer $m\geq1$ and $r\geq1$. Let $U$ be the collection of sets that are the union of m unit disks. In range space $(\calP,\{\calP\cap s\mid s\in U\})$,
Let $\calA$ be a $(1/2r)$-approximation for the range space,
where $r=w(P)/(\varepsilon\MaxCRS(\calP, m))$.
If $\OPT_{A}$ is a optimal solution for $\MaxCRS(A, m)$, then $\OPT_{A}$ is a
$(1-\varepsilon)$-approximation for $\MaxCRS(\calP, m)$.
\end{lemma}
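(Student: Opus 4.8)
The plan is to run the standard two-sided sampling argument for relative $\varepsilon$-approximations of a range space: apply the approximation guarantee once to the configuration that is optimal for $\calP$ and once to the configuration that is optimal for $\calA$, and then chain the two losses.

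First I would fix notation. Write $r=w(\calP)/(\varepsilon\cdot\MaxCRS(\calP,m))$ as in the statement (this tacitly assumes $\MaxCRS(\calP,m)>0$, which costs nothing). Let $S^{\star}\in U$ be an optimal union of $m$ unit disks for $\calP$, so $w(\calP\cap S^{\star})=\MaxCRS(\calP,m)$, and let $S_{A}=\OPT_{A}\in U$ be an optimal union of $m$ unit disks for the weighted point set $\calA$, so $w(\calA\cap S_{A})=\MaxCRS(\calA,m)$. Two facts drive the proof: (i) both $S^{\star}$ and $S_{A}$ are legitimate ranges of $(\calP,\{\calP\cap s\mid s\in U\})$, since each is a union of $m$ unit disks; and (ii) because $\calA$ is a $(1/2r)$-approximation, every range $s\in U$ satisfies $\bigl|\,w(\calA\cap s)/w(\calA)-w(\calP\cap s)/w(\calP)\,\bigr|\le 1/(2r)$, hence in particular both one-sided inequalities.

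Then I would chain the estimates. Applying (ii) to $s=S^{\star}$ gives $w(\calA\cap S^{\star})/w(\calA)\ge \MaxCRS(\calP,m)/w(\calP)-1/(2r)$. Since $S^{\star}$ is a feasible union of $m$ unit disks for $\calA$, optimality of $S_{A}$ gives $w(\calA\cap S_{A})\ge w(\calA\cap S^{\star})$. Applying (ii) to $s=S_{A}$ gives $w(\calP\cap S_{A})/w(\calP)\ge w(\calA\cap S_{A})/w(\calA)-1/(2r)$. Combining the three inequalities yields $w(\calP\cap S_{A})/w(\calP)\ge \MaxCRS(\calP,m)/w(\calP)-1/r$, i.e.\ $w(\calP\cap S_{A})\ge \MaxCRS(\calP,m)-w(\calP)/r$. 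Substituting $w(\calP)/r=\varepsilon\cdot\MaxCRS(\calP,m)$ gives $w(\calP\cap S_{A})\ge(1-\varepsilon)\MaxCRS(\calP,m)$, which is exactly the assertion that $\OPT_{A}$ is a $(1-\varepsilon)$-approximation for $\MaxCRS(\calP,m)$.

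The only point requiring care is to match the precise convention of \cite{de2009covering} for a ``$(1/2r)$-approximation'' (weighted vs.\ unweighted normalization, and whether $\calA$ carries induced or rescaled weights) and to confirm that $U$ is exactly the family of unions of $m$ unit disks, so that the guarantee applies verbatim to both $S^{\star}$ and $S_{A}$; once that is pinned down, the argument is just the two invocations above, each costing $1/(2r)$ for a total additive slack of $1/r$, which was chosen precisely so that $w(\calP)/r=\varepsilon\cdot\MaxCRS(\calP,m)$. I do not expect any genuine obstacle beyond this bookkeeping.
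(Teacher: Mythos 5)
Your proposal is correct, and it is the standard two-sided chaining argument for relative approximations of a range space; note that the paper itself gives no proof of this lemma (it is imported verbatim from \cite{de2009covering}), and your argument is essentially the one given in that reference. The one caveat you already flag---pinning down the weighted normalization convention for a $(1/2r)$-approximation and checking that both $S^{\star}$ and $S_A$ lie in the range family $U$---is exactly the right bookkeeping, and once it is settled the two invocations of the approximation guarantee, each costing $1/(2r)$, give the claimed $(1-\varepsilon)$ bound.
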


By the above lemma, we only need to build a $(1/4r)$-approximation $\calA$ for the cell $\cell$,
where $r=W_{\cell}/(\varepsilon\OPT(\cell,k))$ and apply an exact algorithm to $\calA$.
Since we can use $O(1/\varepsilon^{2})$ unit disks to cover the whole cell,
we can see that $r<O(1/k\varepsilon^{3})$.
We can build the $1/4r$-approximation $A$ of size $O(\frac{k^{2}}{\varepsilon^{6}}\log\frac{k}{\varepsilon})$ in $O(n_{\cell}\cdot(\frac{k}{\varepsilon})^{O(1)})$ time (see e.g.,~\cite{fhktww-a-07}).
Using the exact algorithm in \cite{de2009covering} for $\MaxCRS(\calP, m)$,
we can compute the exact solution for $\calA$
in $O((\frac{k}{\varepsilon})^{O(\sqrt{k})})$ time.

Suppose we have computed all $\F(\cell,k)$ values for the first $m$ cells.
If $m\leq O(1/\varepsilon^{4})$, we apply the dynamic programming, as in Section~\ref{subsec:dp}.
Otherwise, we apply the algorithm $\GREEDYP$ in Section~\ref{subsec:greedy}.
The correctness proof is exactly the same, except that the approximation ratio in Lemma~\ref{lm:greedyratio}
changes slightly to $(1-\frac{\varepsilon^{2}}{8})$.

Now we can summarize the overall running time.
Computing the value $\F(\cell,1)$ for each cell requires $O(n\varepsilon^{-6})$ time.
We build $\varepsilon$-approximations in $m$ different cells and $k$ is at most
$\min\{O(1/\varepsilon^{2}),m\}$.
Therefore the overall running time is $O(n(\frac{1}{\varepsilon})^{O(1)}+m(\frac{1}{\varepsilon})^{\Delta})$,
where $\Delta=O(\min(\sqrt{m},\frac{1}{\varepsilon}))$.

%\begin{lemma}
%\label{MaxCRSPm}
%Give a set $\calP$ of n weighted points, an integer $m\geq1$,
%we can find the exact value of $\MaxCRS(\calP, m)$ in $O((n)^{O(\sqrt{m})})$ time.
%\end{lemma}

%\begin{lemma}
%\label{epsilonapprox1}
%Give a set $\calP$ of n weighted points, an integer $m\geq1$ and $r\geq1$. Let $U$ be the collection of sets that are the union of m unit
%disks. In range space $(\calP,\{\calP\bigcap s|s\in U\})$
%we can build a $(1/r)$-approximation $A$ of size $O((vr)^{2}\log vr)$ for $\calP$
%in $O(n_{c}(vr)^{12}\log^{6} vr)$ time, Where $v$ is the maximum number of
%cells that the vertical decomposition of $m$ unit disks can have,i.e. $v=O(m^{2})$ \cite{fhktww-a-07}.
%\end{lemma}

%For computing $\F(\cell_{i},k)$, we use some other results of them. According to their Lemmas 1-6, in order to calculate a $(1-\frac{\varepsilon}{2})$-approximation for $\F(\cell_{i},k)$, we only need to build a $\Theta(\varepsilon^3)$-approximation $A$ of $\cell\cap\calP$ and apply an exact algorithm(which runs in $O((n)^{O(\sqrt{m})})$ time  to $A$.

%Thus the construction of $\Theta(\varepsilon^3)$ approximation $A$ for $\cell\cap\calP$ takes $O(n_{c}k^{24}(\frac{1}{\varepsilon})^{36}\log^{6}\frac{k}{\varepsilon})$ time and the exact algorithm running on $A$ takes $O((\frac{k}{\varepsilon})^{O(\sqrt{k})})$ time.

%Attaining the subroutine for computing $\F(\cell,k)$, the remaining part of the algorithm is similar to ~\ref{algo:mainalgo2}.

\end{document}